   \newcommand\SkipToFmtEnd{}%
   \newcommand\EndFmtInput{}%
   \long\def\SkipToFmtEnd#1\EndFmtInput{}%
\newcommand\ReadOnlyOnce[1]{\@ifundefined{#1}{\@namedef{#1}{}}\SkipToFmtEnd}
\DeclareFontFamily{OT1}{cmtex}{}
\DeclareFontShape{OT1}{cmtex}{m}{n}
  {<5><6><7><8>cmtex8
   <9>cmtex9
   <10><10.95><12><14.4><17.28><20.74><24.88>cmtex10}{}
\DeclareFontShape{OT1}{cmtex}{m}{it}
  {<-> ssub * cmtt/m/it}{}
\DeclareFontShape{OT1}{cmtt}{bx}{n}
  {<5><6><7><8>cmtt8
   <9>cmbtt9
   <10><10.95><12><14.4><17.28><20.74><24.88>cmbtt10}{}
\DeclareFontShape{OT1}{cmtex}{bx}{n}
  {<-> ssub * cmtt/bx/n}{}
\newcommand{\Conid}[1]{\mathit{#1}}
\newcommand{\Varid}[1]{\mathit{#1}}
\newcommand{\anonymous}{\kern0.06em \vbox{\hrule\@width.5em}}
\newcommand{\bind}{\mathbin{>\!\!\!>\mkern-6.7mu=}}
\newdimen\mathindent\mathindent\leftmargini}%
\def\resethooks{%
  \global\let\SaveRestoreHook\empty
  \global\let\ColumnHook\empty}
\newcommand*{\savecolumns}[1][default]%
  {\g@addto@macro\SaveRestoreHook{\savecolumns[#1]}}
\newcommand*{\restorecolumns}[1][default]%
  {\g@addto@macro\SaveRestoreHook{\restorecolumns[#1]}}
\newcommand*{\aligncolumn}[2]%
  {\g@addto@macro\ColumnHook{\column{#1}{#2}}}
\newcommand{\onelinecommentchars}{\quad-{}- }
\newcommand{\commentbeginchars}{\enskip\{-}
\newcommand{\commentendchars}{-\}\enskip}
\newcommand{\visiblecomments}{%
  \let\onelinecomment=\onelinecommentchars
  \let\commentbegin=\commentbeginchars
  \let\commentend=\commentendchars}
\newcommand{\invisiblecomments}{%
  \let\onelinecomment=\empty
  \let\commentbegin=\empty
  \let\commentend=\empty}
\newlength{\blanklineskip}
\newcommand{\hsindent}[1]{\quad}% default is fixed indentation
\let\hspre\empty
\let\hspost\empty
\newcommand{\hsnewpar}[1]%
  {{\parskip=0pt\parindent=0pt\par\vskip #1\noindent}}
\newcommand{\hscodestyle}{}
\newcommand{\sethscode}[1]%
  {\expandafter\let\expandafter\hscode\csname #1\endcsname
   \expandafter\let\expandafter\endhscode\csname end#1\endcsname}
   \let\hspre\(\let\hspost\)%
   \let\hspre\(\let\hspost\)%
\newcommand{\plainhs}{\sethscode{plainhscode}}
\def\codeframewidth{\arrayrulewidth}
   \let\endoflinesave=\\
   \framedhslinecorrect\endoflinesave{.5ex}\hline
\newcommand{\framedhslinecorrect}[2]%
  {#1[#2]}
\def\column##1##2{}%
   \newcommand\>[1][]{}\newcommand\<[1][]{}\newcommand\\[1][]{}%
   \def\fromto##1##2##3{##3}%
\let\orighscode=\hscode
   \let\origendhscode=\endhscode
   \def\endhscode{\def\hscode{\endgroup\def\@currenvir{hscode}\\}\begingroup}
\def\hscode{\endgroup\def\@currenvir{hscode}}}%
   \global\let\hscode=\orighscode
   \global\let\endhscode=\origendhscode}%
\let\HaskellResetHook\empty
\newcommand*{\AtHaskellReset}[1]{%
  \g@addto@macro\HaskellResetHook{#1}}
\newcommand*{\HaskellReset}{\HaskellResetHook}
\newcommand\hsforall{\global\let\hsdot=\hsperiodonce}
\newcommand*\hsperiodonce[2]{#2\global\let\hsdot=\hscompose}
\newcommand*\hscompose[2]{#1}
\newcommand{\Set}{\mathbb{S}}
\newcommand{\Fin}{\mathbb{F}}
\newcommand{\op}{{\mathrm{op}\!}}
\newcommand{\str}{\mathrm{str}}
\newcommand{\id}{\mathrm{id}}
\newcommand{\Id}{\mathrm{Id}}
\newcommand{\Func}{\left[\Fin,\Set\right]}
\newcommand{\Prof}{\left[\Fin^\op \times \Fin, \Set\right]}
\newcommand{\ProfS}{{\Prof_\mathrm{s}}}
\newcommand{\Hom}{\mathrm{Hom}}
\newcommand{\lstar}[1]{{{#1}_{*}}}
\newcommand{\ustar}[1]{{{#1}^{*}\!}}
\newcommand{\ladm}[1]{{#1}_{!}}
\newcommand{\blank}{\raisebox{.4pt}{-}}
\newcommand{\inc}{i}
\newcommand{\ev}{\mathrm{ev}}
\newcommand{\curry}[1]{{\left\lfloor{#1}\right\rfloor}}
\newcommand{\Mon}[1]{\mathrm{Mon}\left({#1}\right)}
\newcommand{\inv}[1]{{\left({#1}\right)^{-1}}}
\newtheorem{definition}{Definition}
\newtheorem{lemma}{Lemma}
\newtheorem{proposition}{Proposition}
\newtheorem{theorem}{Theorem}
\newtheorem{remark}{Remark}
\newtheorem*{ack}{Acknowledgments}
\title{Relating Idioms, Arrows and Monads from Monoidal Adjunctions}
\author{Exequiel Rivas
\institute{$\pi.r^2$ team, INRIA - IRIF\\ Paris, France}
\email{er@irif.fr}
}
\begin{document}
\maketitle
\begin{abstract}
  We revisit once again the connection between three notions of
  computation: \emph{monads}, \emph{arrows} and \emph{idioms} (also
  called \emph{applicative functors}). We employ monoidal categories
  of finitary functors and profunctors on finite sets as models of
  these notions of computation, and develop the connections between
  them through adjunctions. As a result, we obtain a categorical
  version of Lindley, Yallop and Wadler's characterisation of monads
  and idioms as arrows satisfying an isomorphism.
\end{abstract}

\section{Introduction}\label{sec:introduction}
The semantic study of computational effects has not only provided
tools for reasoning about effectful programs, but also introduced a
methodology for organising functional code around fundamental
interfaces coming from mathematics. This is the story of monads, which
in the beginning were introduced by Moggi to model effectful
computations, but soon after that Wadler internalised them in a
functional programming language to structure code. In this way, monads
became an interface capturing a pattern, which was later expressed as
a \emph{type-class}, and chosen to be the interface for the basic IO
mechanism in Haskell. Over time, new interfaces were defined, each one
providing different levels of control on how to combine computations.
We are interested in two of these interfaces that emerged:
\emph{arrows} and \emph{idioms}. Arrows were introduced by
Hughes~\cite{Hug:GMA} and idioms, or \emph{applicative functors}, by
McBride and Paterson~\cite{MBP:APE}.  Even when arrows and idioms
differ from monads, they still share some basic form. In previous
work~\cite{RJ:NCM}, we gave an unified presentation of these three
interfaces in terms of monoids in monoidal categories. In this paper
we take a first step in addressing the connection between these
interfaces by looking at their monoidal structures and relating them
by adjunctions.

Instead of starting from scratch, we build on previous results
relating these structures. Ten years ago, in MSFP 2008, Lindley,
Yallop and Wadler presented an article connecting idioms, arrows and
monads. In a nutshell, we can summarise their result in the following
diagram of embeddings
\[
\xymatrix@C+=1cm@M+=0.5cm{
\textrm{Idioms} \ar@{^{(}->}[r] & \textrm{Arrows} \ar@{^{(}->}[r] & \textrm{Monads}
}
\]
together with the equations
\begin{align}
\textrm{Idiom} &= \textrm{Arrow} + (A\leadsto B \cong 1\leadsto (A \to B)),\label{eq:isoidiomintro}\\
\textrm{Monad} &= \textrm{Arrow} + (A\leadsto B \cong A\to (1 \leadsto B)).\label{eq:isomonadintro}
\end{align}
These equations explain how to see idioms and monads as arrows in
which a particular isomorphism holds. The method they used to
establish these formulas was purely syntactical. Basically, they
proposed theories for describing different versions of
$\lambda$-calculus with effects, and used a notion of theory
translation to show the relations.

This article is a first step towards understanding these equalities
from a semantic point of view. In what follows, we deconstruct these
characterisations by using the underlying structure of idioms, monads
and arrows: functors and strong profunctors.
\begin{remark}
  In order to avoid size issues, we restrict ourselves to modelling
  the notions of computation by finitary functors, and their
  profunctor counterpart, which are strong endoprofunctors on finite
  sets. It should be possible to replace finite sets by a general
  category $\mathbb{C}$ together with an inclusion functor into sets,
  subject to certain conditions, such as density.
\end{remark}
Denoting the category of finite sets by $\Fin$ and the category of
sets by $\Set$, we aim to express
\Cref{eq:isoidiomintro,eq:isomonadintro} in terms of functor
categories
\[
\Func\qquad\mbox{and}\qquad \ProfS
\]
where ${\blank}^\op$ denotes the opposite category and the subscript
``$\mathrm{s}$'' represents the restriction to strong profunctors. The idea that we
develop in the article follows from the next observation. Given a
strong endoprofunctor $P$, we can fix its first parameter and obtain a
finitary functor:
\[
\ustar{P}(X) = P(1, X),\quad\mbox{where $1$ denotes the terminal object of $\Fin$}.
\]
Conversely, there are two ways in which a finitary functor $F : \Fin
\to \Set$ might be presented as a strong endoprofunctor on $\Fin$:
\[
\ladm{F}(X,Y) = F(X\to Y),\qquad \lstar{F}(X,Y) = X \to F Y.
\]
Taking $P(X,Y) = X\leadsto Y$ (the underlying profunctor of an
arrow), we can express the isomorphisms
\[
A\leadsto B \cong 1\leadsto (A \to B),\qquad A\leadsto B \cong A\to (1 \leadsto B)
\]
simply as the equivalences
\[
P(A,B) \cong P(1, A\to B),\qquad P(A,B) \cong A \to P(1, B)
\]
or directly as profunctor equivalences in terms of $\ustar{\blank}$,
$\ladm{\blank}$ and $\lstar{\blank}$
\[
P \cong \ladm{(\ustar{P})},\qquad P \cong \lstar{(\ustar{P})}.
\]%
In the following sections, a pair of adjunctions between
$\ladm{\blank}$, $\ustar{\blank}$ and $\lstar{\blank}$ are introduced,
and as a result we obtain that the map $P \mapsto
\ladm{\left(\ustar{P}\right)}$ extends to a comonad and the map $P
\mapsto \lstar{\left(\ustar{P}\right)}$ extends to a monad. The
mentioned adjunctions are monoidal, and thus establish equivalences
between the categories of monoids. In this way, we get a first
approximation to Lindley et al.'s result from a categorical point of
view.

This article might be seen as a continuation of previous articles in
the spirit of \emph{notions of computation as monoids}. The starting
point was an article of the author with
Jaskelioff~\cite{RJ:NCM}. Later, the adjunctions between
$\ladm{\blank}$, $\ustar{\blank}$ and $\lstar{\blank}$ were
exploited in the context of non-monadic effect
handlers~\cite{PSR:HNMC}, although their relation to the
\Cref{eq:isoidiomintro,eq:isomonadintro} by using the derived
monad/comonad and their corresponding monoidal structure was not
treated.

\newtheorem*{contribution}{Contributions}
\begin{contribution}
  The main contributions of this paper are:
  \begin{itemize}
    \item We give a detailed and more mathematical introduction to the
      adjunctions presented in previous work~\cite{PSR:HNMC}.
    \item We recognise in these adjunctions the essence of Lindley,
      Yallop and Wadler's result from a semantic point of view.
  \end{itemize}
\end{contribution}

The rest of the article is organised as follows. In
Section~\ref{sec:background}, we review some background material
needed to express the adjunctions and relate monoidal structures to
arrows, idioms and monads. Then, in Section~\ref{sec:adjunctions}, we
present adjunctions between the categories representing effects, and
introduce some properties they have. In Section~\ref{sec:equivalences}
we establish \Cref{eq:isoidiomintro,eq:isomonadintro} from a
categorical argument, explaining Lindley et al.'s result from a
semantic point of view. Finally, in Section~\ref{sec:conclusions}, we
conclude and discuss further work.

\section{Background}\label{sec:background}

We assume the reader is familiar with basic concepts of category
theory, including adjunctions. We briefly introduce some categorical
topics we need, such as monoidal categories, partly to be
self-contained and partly to fix notations. An exception to this is
the use of coends, which are not explained here mainly for brevity
reasons. In case the reader is not familiar with them, coends might be
(informally) seen as existential or $\Sigma$ types. The usual
reference for all these topics is Mac Lane's book~\cite{MacL:CWM}, and
a complete reference for (co)ends is Loregian's
notes~\cite{Lor:TCC}. An introduction to these topics oriented for the
functional programming community might be found in~\cite{RJ:NCM}. We
try to follow standard categorical notation for most of the text. We
denote exponentials by $A \to B$, the evaluation morphism by $\ev : (A
\to B) \times A \to B$ and the currying of a morphism $f : A \times B
\to C$ as $\curry{f} : A \to (B \to C)$. Note that exponentials of
$\Fin$ coincide with exponentials of $\Set$ and their respective
hom-sets, and so we treat them ambiguously.

\subsection{Implementation of notions of computation}
Before starting with the categorical preliminaries, we remind the
reader of the basic interfaces we are treating, as in the rest of the
text we will not put emphasis on the programming side.

The first interface we are interested in are monads, which are abstracted
by the following type-class.
\begin{hscode}\SaveRestoreHook
\column{B}{@{}>{\hspre}l<{\hspost}@{}}%
\column{3}{@{}>{\hspre}l<{\hspost}@{}}%
\column{5}{@{}>{\hspre}l<{\hspost}@{}}%
\column{13}{@{}>{\hspre}l<{\hspost}@{}}%
\column{E}{@{}>{\hspre}l<{\hspost}@{}}%
\>[3]{}\mathbf{class}\;\Conid{Functor}\;\Varid{f}\Rightarrow \Conid{Monad}\;\Varid{f}\;\mathbf{where}{}\<[E]%
\\
\>[3]{}\hsindent{2}{}\<[5]%
\>[5]{}\Varid{return}{}\<[13]%
\>[13]{}\mathbin{::}\Varid{a}\to \Varid{f}\;\Varid{a}{}\<[E]%
\\
\>[3]{}\hsindent{2}{}\<[5]%
\>[5]{}(\bind ){}\<[13]%
\>[13]{}\mathbin{::}\Varid{f}\;\Varid{a}\to (\Varid{a}\to \Varid{f}\;\Varid{b})\to \Varid{f}\;\Varid{b}{}\<[E]%
\ColumnHook
\end{hscode}\resethooks
A monad is a functor (type constructor which is compatible with \ensuremath{\mathbin{-}\to \mathbin{-}}) which is endowed with two basic primitives. Reading a term of type
\ensuremath{\Varid{f}\;\Varid{a}} as a computation returning a value of type \ensuremath{\Varid{a}}, then the
primitive \ensuremath{\Varid{return}} embeds pure values as computations, while \ensuremath{(\bind )}
(pronounced \emph{bind}) accounts for the sequencing of computations.

The next interface we will model with monoidal categories is that of
applicative functors.
\begin{hscode}\SaveRestoreHook
\column{B}{@{}>{\hspre}l<{\hspost}@{}}%
\column{3}{@{}>{\hspre}l<{\hspost}@{}}%
\column{5}{@{}>{\hspre}l<{\hspost}@{}}%
\column{12}{@{}>{\hspre}l<{\hspost}@{}}%
\column{E}{@{}>{\hspre}l<{\hspost}@{}}%
\>[3]{}\mathbf{class}\;\Conid{Functor}\;\Varid{f}\Rightarrow \Conid{Applicative}\;\Varid{f}\;\mathbf{where}{}\<[E]%
\\
\>[3]{}\hsindent{2}{}\<[5]%
\>[5]{}\Varid{pure}{}\<[12]%
\>[12]{}\mathbin{::}\Varid{a}\to \Varid{f}\;\Varid{a}{}\<[E]%
\\
\>[3]{}\hsindent{2}{}\<[5]%
\>[5]{}(\circledast){}\<[12]%
\>[12]{}\mathbin{::}\Varid{f}\;(\Varid{a}\to \Varid{b})\to \Varid{f}\;\Varid{a}\to \Varid{f}\;\Varid{b}{}\<[E]%
\ColumnHook
\end{hscode}\resethooks
If we flip arguments for the combinator \ensuremath{(\circledast)}, we see that
applicative functors are similar to monads, with the difference that
the ``following'' computation on the combinator \ensuremath{(\circledast)} cannot depend
dynamically on its input parameter \ensuremath{\Varid{a}}.

The third interface we cover is arrows.
\begin{hscode}\SaveRestoreHook
\column{B}{@{}>{\hspre}l<{\hspost}@{}}%
\column{3}{@{}>{\hspre}l<{\hspost}@{}}%
\column{5}{@{}>{\hspre}l<{\hspost}@{}}%
\column{12}{@{}>{\hspre}l<{\hspost}@{}}%
\column{E}{@{}>{\hspre}l<{\hspost}@{}}%
\>[3]{}\mathbf{class}\;\Conid{Arrow}\;(\ensuremath{\leadsto})\;\mathbf{where}{}\<[E]%
\\
\>[3]{}\hsindent{2}{}\<[5]%
\>[5]{}\Varid{arr}{}\<[12]%
\>[12]{}\mathbin{::}(\Varid{a}\to \Varid{b})\to (\Varid{a}\ensuremath{\leadsto}\Varid{b}){}\<[E]%
\\
\>[3]{}\hsindent{2}{}\<[5]%
\>[5]{}(\ensuremath{\ggg}){}\<[12]%
\>[12]{}\mathbin{::}(\Varid{a}\ensuremath{\leadsto}\Varid{b})\to (\Varid{b}\ensuremath{\leadsto}\Varid{c})\to (\Varid{a}\ensuremath{\leadsto}\Varid{c}){}\<[E]%
\\
\>[3]{}\hsindent{2}{}\<[5]%
\>[5]{}\Varid{first}{}\<[12]%
\>[12]{}\mathbin{::}(\Varid{a}\ensuremath{\leadsto}\Varid{b})\to ((\Varid{a},\Varid{c})\ensuremath{\leadsto}(\Varid{b},\Varid{c})){}\<[E]%
\ColumnHook
\end{hscode}\resethooks
This interface builds on profunctors rather than on functors.  Terms
of type \ensuremath{\Varid{a}\ensuremath{\leadsto}\Varid{b}} are read as computations that ask for input of type
\ensuremath{\Varid{a}} and return output of type \ensuremath{\Varid{b}}. Using the \ensuremath{\Varid{arr}} combinator, one can
construct a computation from a function. The combinator \ensuremath{(\ensuremath{\ggg})}
sequences two computations. Finally, the combinator \ensuremath{\Varid{first}} represents
the \emph{strength} of the arrow: it is used to construct computations that
operate on an extra parameter without modifying it. The \ensuremath{\Varid{first}}
operation is handling the compatibility of the arrow with the
cartesian structure.

\begin{remark}\label{rk:strengh-implementation}
Morally, one can think that there is nothing corresponding to the
\ensuremath{\Varid{first}} operation in the type-classes \ensuremath{\Conid{Monad}} and \ensuremath{\Conid{Applicative}}
because a strength can be obtained for free in the case of functors:
\begin{hscode}\SaveRestoreHook
\column{B}{@{}>{\hspre}l<{\hspost}@{}}%
\column{E}{@{}>{\hspre}l<{\hspost}@{}}%
\>[B]{}\sigma \mathbin{::}\Varid{f}\;\Varid{a}\to \Varid{b}\to \Varid{f}\;(\Varid{a},\Varid{b}){}\<[E]%
\\
\>[B]{}\sigma \;\Varid{v}\;\Varid{b}\mathrel{=}\Varid{fmap}\;(\lambda \Varid{a}\to (\Varid{a},\Varid{b}))\;\Varid{v}{}\<[E]%
\ColumnHook
\end{hscode}\resethooks
We return later to this point in our discussion of strong profunctors.
\end{remark}

\subsection{Monoidal categories}
We recall briefly the basic definitions of monoidal categories. A
monoidal category is a category $\mathbb{C}$ together with a bifunctor
$\otimes : \mathbb{C} \times \mathbb{C} \to \mathbb{C}$, a
distinguished object $I$, and natural isomorphisms
\[
\lambda : I\otimes A \cong A,\quad
\rho : A\otimes I \cong A,\quad
\alpha : A\otimes (B\otimes C) \cong (A\otimes B)\otimes C
\]
such that certain coherences hold. The bifunctor $\otimes$ is referred
to as the \emph{tensor}, and the distinguished object $I$ as the
\emph{unit}. Moreover, we refer to the monoidal category by the triple
$(\mathbb{C}, \otimes, I)$, omitting $\lambda$, $\rho$ and
$\alpha$. When it is clear from the context, we even omit the monoidal
structure and just speak of $\mathbb{C}$ being a monoidal category.

Monoidal categories generalise categories with finite products: the
bifunctor is given by the binary product $\times$, the unit by the
terminal object, and the isomorphisms are defined as
\[
\begin{array}{lclcl}
  \lambda = \pi_2 & & \rho = \pi_1 & & \alpha = \langle \langle \pi_1, \pi_1 \circ \pi_2 \rangle, \pi_2 \circ \pi_2 \rangle \\
  \lambda^{-1} = \langle !_A, \id \rangle & & \rho^{-1} = \langle \id, !_A \rangle & & \alpha^{-1} = \langle \pi_1 \circ \pi_1, \langle \pi_2 \circ \pi_1, \pi_2 \rangle \rangle
\end{array}
\]%
where $\langle f , g \rangle : A \to B \times C$ is defined by the
universal property of products as the unique morphism such that $\pi_1
\circ \langle f, g \rangle = f$ and $\pi_2 \circ \langle f, g \rangle
= g$.

Another popular example of monoidal category is the category of
endofunctors $\left[\mathbb{C},\mathbb{C}\right]$ on any category
$\mathbb{C}$. The tensor is defined as composition of functors, and
the identity functor $\Id$ is the unit. This a \emph{strict} monoidal
category, i.e. a monoidal category in which the isomorphisms
$\lambda$, $\rho$ and $\alpha$ are identities.

Following the \emph{microcosm principle}, the notion of monoid can be
defined in any monoidal category $(\mathbb{C},\otimes,I)$. A monoid is
a triple $(M, m, e)$ where $M$ is an object of $\mathbb{C}$, and $m :
M \otimes M \to M$, $e : I \to M$ are morphisms of $\mathbb{C}$ such
that the equations
\[
\lambda = m \circ (e \otimes \id),\quad
\rho = m \circ (\id \otimes e),\quad
m \circ (m \otimes \id) \circ \alpha = m \circ (\id \otimes m)
\]
hold. For example, a monoid in $(\Set, \times, 1)$ is a monoid in the
usual mathematical sense, and a monoid in
$\left[\mathbb{C},\mathbb{C}\right]$ is a monad on
$\mathbb{C}$. Monoids on a monoidal category $(\mathbb{C}, \otimes,
I)$ form a category, where an morphism from a monoid $(M, m_M, e_M)$
to a monoid $(N, m_N, e_N)$ is a morphism $f : M \to N$ such that
\[
f \circ e_M = e_N\qquad\mbox{and}\qquad f \circ m_M = m_N \circ (f \otimes f).
\]
We denote this category by $\Mon{\mathbb{C}}$.

Let $(\mathbb{C},\otimes,I)$ and $(\mathbb{D},\oplus,J)$ be monoidal
categories. It is natural to consider functors that are compatible
with the monoidal structures. A \emph{strong monoidal functor} is a
triple $(F, \gamma, \gamma_0)$ where $F : \mathbb{C} \to \mathbb{D}$
is a functor, $\gamma_0 : J \cong F I$ is an isomorphism and
$\gamma_{A,B} : F A \oplus F B \cong F (A \otimes B)$ is family of
isomorphisms natural in $A$ and $B$ such that certain coherences
hold. If instead of isomorphisms we just consider morphisms, then we
have \emph{monoidal functors} if we ask the morphisms to go from left
to right, or \emph{opmonoidal functors} if the morphisms go from right
to left. A monoidal or opmonoidal functor can be \emph{normal} if the
morphism relating the units ($I$ and $J$) is invertible. We usually
omit the monoidal structure from $(F, \gamma, \gamma_0)$ and speak of
$F$ being a (strong, op)monoidal functor. However, it is important to
have in mind that a functor could have different monoidal structures.
 
\subsection{Profunctors and their strengths}\label{ss:profunctors-and-strength}
An endoprofunctor on the category of finite sets is a functor
$\Fin^\op \times \Fin \to \Set$.  By fixing arguments, each profunctor
on $\Fin$ comes with two canonical strengths. If we fix the
contravariant argument, we obtain a functor with a strength $\sigma^P$
in the usual sense (see \Cref{rk:strengh-implementation}), while in
the case of fixing the covariant argument, we obtain a strength in the
sense of Brady and Trimble~\cite{BT:CIPPL}. That is, we obtain a
contravariant functor $F : \Fin \to \Set$ with a family of maps
\[
\varsigma_{A,B} : F(A\times B)\times A \to F B,\qquad (v,a) \mapsto
F(\lambda z.(a, z))(v)
\]%
natural in $A$ and $B$, such that certain coherence conditions hold.
We denote these two strengths for an endoprofunctor $P : \Fin^\op \times \Fin \to \Set$ by
\[
\sigma^P : P(X,Y) \times Z \to P(X\times Z, Y),\qquad\qquad \varsigma^P : P(X\times Y,Z)\times X \to P(Y,Z).
\]
\begin{remark}
Note that such strengths are always available for functors $\Fin \to
\Set$. When we consider these strengths on the component $Z = 1$, the
strength morphism is actually invertible.
\end{remark}

The strengths $\sigma^P$ and $\varsigma^P$ are strengths independent
in each variable. Alternatively, we might consider strength for
endoprofunctors which act on both variables at the same time. This
gives the notion of \emph{strong endoprofunctor}, which is a form of
the notion of \emph{Tambara module} defined by Pastro and
Street~\cite{PS:DMC}.
\begin{definition}
  A endoprofunctor $P : \Fin^\op \times \Fin \to \Set$ is said to be \emph{(right)
  strong} if it comes equipped with a family of morphisms
  \[
  \str_{X,Y,Z} : P(X,Y) \to P(X\times Z, Y\times Z)
  \]
  natural in $X$, $Y$ and dinatural in $Z$ such that the equations
  \begin{gather*}
  P(\id, \pi_1) \circ \str_{X,Y,1} = P(\pi_1, \id), \\
  \str_{X,Y,W} \circ \str_{X,Y,V} = P(\alpha^{-1}, \alpha) \circ \str_{X,Y,V\times W}
  \end{gather*}
  hold.
  Since we work with the cartesian monoidal structure, there is no need
  to define left strong and bistrong endoprofunctors.
\end{definition}

Accordingly, a notion of morphism between strong endoprofunctors is
defined.

\begin{definition}
  A \emph{strong natural transformation} from $(P, \str^P)$ to $(Q,
  \str^Q)$ is a natural transformation $\tau : P \to Q$ such that the
  equation
  \[
  \tau_{X\times Z, Y\times Z} \circ \str^P_{X,Y,Z} = \str^Q_{X,Y,Z} \circ \tau_{X,Y}
  \]
  holds.
\end{definition}

Strong endoprofunctors and strong natural transformations between them
form a category, which we denote by $\ProfS$. There is a forgetful
functor
\[
U : \ProfS \to \Prof
\]%
which forgets the strength on a strong endoprofunctor. The functor $U$
has left and right adjoints, providing a way to construct free and
cofree strong endoprofunctors from an
endoprofunctor~\cite{RJ:NCM,PS:DMC}.

\subsection{Monoidal Structures on Finitary Functors}
We have already mentioned that the category of endofunctors forms a
monoidal category with composition, and that its monoids are
monads. In particular, a monoid in the monoidal category
$(\left[\Set,\Set\right], \circ, \Id)$ is a monad on $\Set$. A
finitary functor $F : \Fin \to \Set$ might be seen as a functor $F : \Set
\to \Set$ by the following construction:
\[
\bar{F}(Z) = \int^Y F(Y) \times (i~Y \to Z)
\]%
where $\inc : \Fin \to \Set$ denotes the inclusion functor from finite
sets to sets.

When we consider a finitary functors, i.e. the category $\Func$, we
obtain an alternative presentation of the composition structure, which
is defined by the coend formula
\[
 (F\circ G)(X) = \int^{C} F~C \times (C \to G X)
\]
and the unit is given by the inclusion functor $\inc$. A monoid for
this monoidal structure is a finitary monad on $\Set$, i.e. a finitary
functor $F$ together with morphisms
\[
e : \inc \to F,\qquad m : F \circ F \to F
\]
such that certain diagrams commute. The morphism $e$ corresponds to the
\ensuremath{\Varid{return}} operation in the \ensuremath{\Conid{Monad}} type-class, while $m$ corresponds to
the operation \ensuremath{(\bind )}.

There is another monoidal structure on finitary functors. As finitary
functors are presheaves, they have a canonical monoidal structure given
by the Day convolution. The tensor of finitary functors $F$ and $G$ as
objects in $\Func$ might be expressed as
\[
  (F \star G)(X) = \int^{C} F~C \times G~(C \to X),
\]%
and the inclusion functor $\inc : \Fin \to \Set$ acts again as the
unit. A monoid for this monoidal structure is given by a finitary
functor $F$ together with morphisms
\[
e : \inc \to F,\qquad m : F \star F \to F
\]
such that certain diagrams commute. The morphism $e$ represents the
\ensuremath{\Varid{pure}} operation of an idiom, and $m$ represents the combinator
\ensuremath{(\circledast)}.
\begin{remark}
  The categorical presentation of idioms as \emph{monoidal functors
    with a strength} is well known~\cite{MBP:APE}. We have chosen a
  different (but equivalent) presentation for the Day convolution as
  it is closer to the traditional \ensuremath{\Conid{Applicative}} interface. As the
  monoidal structure considered is cartesian, both constructions are
  equivalent.
\end{remark}

In contexts where confusion might arise, to distinguish between
$(\Func, \circ, \inc)$ and $(\Func, \star, \inc)$ without having to
mention all the monoidal structure, we simply refer to them as
$\Func_\circ$ and $\Func_\star$.

\subsection{Monoidal Structure on Strong Profunctors}
Profunctors in general, and endoprofunctors in particular, can be
composed by B\'enabou's tensor, giving rise to a monoidal structure on
the category of endoprofunctors $\Prof$. Indeed, this monoidal
structure can be lifted to strong endoprofunctors. We make this
structure precise in the following definition.

\begin{definition}
The category $\ProfS$ has a monoidal structure, with tensor product of
$(P, \str^P)$ and $(Q, \str^Q)$ defined as $(P\otimes Q,
\str^{(P\otimes Q)})$ where
\[
  (P \otimes Q)(X,Y) = \int^{W} P(X, W)\times Q(W, Y),
\]%
\[
  \str^{(P \otimes Q)}_{X,Y,Z} : \int^{W} P(X, W)\times Q(W, Y) \longrightarrow \int^{W} P(X \times Z, W) \times Q(W, Y\times Z).
\]%
The strength $\str^{(P \otimes Q)}_{X,Y,Z}$ is defined by the
universal property of coends as the unique morphism such that
\[
\str^{(P \otimes Q)}_{X,Y,Z} \circ \iota_W = \iota_{W\times Z} \circ (\str^P_Z \times \str^Q_Z).
\]%
with $\iota_W$ being the coend injection for $W$. The unit for the
monoidal tensor is $(\Hom, \str^\Hom)$, where
\[
\begin{array}[t]{lrcl}
\str^\Hom_{X,Y,Z} : & \Hom(X,Y) & \longrightarrow & \Hom(X\times Z, Y\times Z) \\
        & f & \longmapsto & f \times \id
\end{array}
\]%
\end{definition}

As observed by Jacobs et al.~\cite{JHH:CSA}, arrows correspond to
monoids in this category (disregarding the fact they consider a
general category $\mathbb{C}$ instead of $\mathbb{F}$). A monoid in
this category is a strong profunctor $(A, \str^A)$ together with
morphisms
\[
e : \Hom \to A,\qquad m : A \otimes A \to A
\]
such that certain diagrams commute. The morphism $e$ represents the
\ensuremath{\Varid{arr}} operation of the arrow, $m$ represents sequential composition
\ensuremath{\ensuremath{\ggg}}, and the strength $\str^A$ of the profunctor gives the \ensuremath{\Varid{first}}
operation.

In what follows, we sometimes refer to a strong profunctor $(P,
\str^P)$ just by $P$, but the reader has to have in mind that a
profunctor could have more than one strength, and that we expect
natural transformations to be strong.

\section{The Cayley and Kleisli Adjunctions}\label{sec:adjunctions}
The \Cref{eq:isoidiomintro,eq:isomonadintro} relate the categories of
monoids corresponding to
\[
\Func_\star,\qquad \ProfS,\qquad \Func_\circ.
\]
To obtain this result, we first construct adjunctions between the
underlying categories, and then see how they interact with the
monoidal structures. The central objects of our adjunctions are strong
profunctors, and how they can be recast into a functor. As we
explained in the introduction, the contravariant argument of the
profunctor might be fixed to the terminal object $1$ of $\Fin$,
obtaining thus a finitary functor. We capture the re-casting with the
functor $\ustar{\blank}$.
\[
  \begin{array}[t]{lcl}
    \ustar{\blank} & : & \ProfS \longrightarrow \Func \\
    \ustar{P} & = & Z \mapsto P(1, Z) \\
    \ustar{\tau}_Z & = & \tau_{1,Z}
  \end{array}
\]

The next step is to find left and right adjoints for the re-casting
functor. These adjoints might be derived using the left and right Kan
extensions~\cite{MacL:CWM}, although we do a direct presentation in
order to minimise the background material.

We begin describing the left adjoint to $\ustar{\blank}$. This functor
was called \emph{Cayley} in previous
work~\cite{PS:DMC,RJ:NCM,PSR:HNMC}. We now refer to it as
$\ladm{\blank}$, but keep the name Cayley to refer to the adjunction.

\begin{theorem}
  The functor $\ustar{\blank}$ has a left adjoint $\ladm{\blank}$ given by
  \[
  \begin{array}[t]{lcl}
    \ladm{\blank} & : & \Func \longrightarrow \ProfS \\
    \ladm{F} & = & ((X,Y) \mapsto F(X\to Y), v \mapsto F(\curry{(\ev \times \id) \circ \alpha})(v) ) \\
    \left(\ladm{\tau}\right)_{X,Y} & = & \tau_{X\to Y}
  \end{array}
  \]
  Moreover, the unit of the adjunction is a natural isomorphism.
\end{theorem}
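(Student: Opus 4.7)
The plan is to verify four things in sequence: (a) that $\ladm{F}$ is a well-defined object of $\ProfS$; (b) that $\ladm{\blank}$ extends to a functor; (c) that the unit $\eta_F : F \to \ustar{\ladm{F}}$, defined componentwise by applying $F$ to the canonical iso $Z \cong 1 \to Z$, is a natural isomorphism; and (d) that $(\ladm{F}, \eta_F)$ satisfies the universal property of an adjunction, namely that every $\beta : F \to \ustar{P}$ factors uniquely as $\ustar{\hat\beta} \circ \eta_F$ for some strong natural transformation $\hat\beta : \ladm{F} \to P$. Point (c) is immediate from the definition of $\eta_F$ and gives the ``moreover'' clause once the adjunction is established.

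For (a), functoriality of $(X,Y) \mapsto F(X \to Y)$ is immediate from the internal-hom bifunctor $\Fin^\op \times \Fin \to \Fin$ composed with $F$. Writing $\phi_{X,Y,Z} = \curry{(\ev \times \id) \circ \alpha} : (X \to Y) \to ((X \times Z) \to (Y \times Z))$, the strength is $\str_{X,Y,Z} = F(\phi_{X,Y,Z})$; naturality in $X, Y$ and dinaturality in $Z$ follow from the corresponding properties of $\phi$ in the cartesian closed category $\Fin$ combined with functoriality of $F$. The two strength axioms reduce, upon applying $F$, to equalities of morphisms in $\Fin$ that one verifies by tracing $\ev$, $\alpha$, and $\curry{\blank}$. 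For (b), naturality of $\left(\ladm{\tau}\right)_{X,Y} = \tau_{X \to Y}$ is naturality of $\tau$, and strength compatibility holds because both strengths are images of the same $\phi_{X,Y,Z}$ under $F$ and $G$.

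For the universal property in (d), given $\beta : F \to \ustar{P}$ define the transpose as
\[
\hat\beta_{X,Y} = P(\inv{\lambda_X}, \ev) \circ \str^P_{1, X \to Y, X} \circ \beta_{X \to Y} \; : \; F(X \to Y) \to P(X, Y).
\]
Naturality of $\hat\beta$ follows from naturality of $\beta$ and of $\str^P$. The triangle equation $\ustar{\hat\beta} \circ \eta_F = \beta$ reduces, after using naturality of $\beta$ along the iso $Z \cong 1 \to Z$ and naturality of $\str^P$ in its second argument, to the unitality strength axiom for $P$ together with the identity $\pi_1 \circ \inv{\lambda_1} = \id$ in $\Fin$. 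For uniqueness, any strong $\alpha : \ladm{F} \to P$ with $\ustar{\alpha} \circ \eta_F = \beta$ has $\alpha_{1, X \to Y}$ fixed by $\beta_{X \to Y}$; applying the strength-compatibility square for $\alpha$ at $(1, X \to Y, X)$ and then composing with $P(\inv{\lambda_X}, \ev)$, using naturality of $\alpha$ along $\ev$ and $\inv{\lambda_X}$, forces $\alpha_{X,Y} = \hat\beta_{X,Y}$.

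The main obstacle will be verifying that $\hat\beta$ is a \emph{strong} natural transformation: this step involves a diagram chase with $\str^{\ladm{F}}_{X,Y,Z} = F(\phi_{X,Y,Z})$ on one side and two applications of $\str^P$ on the other, and it requires invoking the second (associativity) strength axiom of $P$ together with naturality of $\str^P$ in its profunctor arguments to collapse the composite back to a single $\str^P$ after $\beta$. All other pieces -- the strength axioms for $\ladm{F}$, functoriality of $\ladm{\blank}$, and the uniqueness argument -- are routine diagrammatic checks once this verification is in hand.
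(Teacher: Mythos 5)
Your proposal is correct and follows essentially the same route as the paper: your unit is the paper's $\eta^{!}_F = F(\curry{\rho})$, and your transpose $\hat\beta_{X,Y} = P(\inv{\lambda_X},\ev)\circ\str^P\circ\beta_{X\to Y}$ is exactly the paper's counit $\varepsilon^{!}_{(P,\str)} = P(\lambda^{-1},\ev)\circ\str$ precomposed with $\ladm{\beta}$, so your universal-arrow formulation is just the unit--counit presentation repackaged. You spell out more of the routine checks (well-definedness of the strength on $\ladm{F}$, uniqueness of the transpose) and correctly flag that the strength-preservation of $\hat\beta$ via the associativity axiom is the one nontrivial chase, which the paper likewise leaves as ``some calculations.''
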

\begin{proof}
  The unit and counit of the adjunction are
  \[
  \begin{array}[t]{lcl}
    \eta^{!}_F & : & F~Z \longrightarrow \ustar{\ladm{F}}~Z \\ 
    \eta^{!}_F & = & F(\curry{\rho})
  \end{array}
  \qquad\qquad
  \begin{array}[t]{lcl}
    \varepsilon^{!}_{(P, \str)} & : & \ladm{\ustar{P}}(X,Y) \longrightarrow P(X,Y) \\ 
    \varepsilon^{!}_{(P, \str)} & = & P(\lambda^{-1}, \ev) \circ \str
  \end{array}
  \]%
  where these definitions are parametric over $Z$ and $(X,Y)$
  respectively. After some calculations, it can be shown that the
  triangular identities hold.  The inverse of the unit is simply
  defined as $\eta^{!{-1}}_F = F(\ev \circ \rho^{-1})$.
\end{proof}

The right adjoint to $\lstar{\blank}$ is defined by a construction
similar to the direct image of a functor. In previous work we have
called this functor \emph{Kleisli}, and therefore we now refer to the
resulting adjunction as the \emph{Kleisli adjunction}.

\begin{theorem}
  The functor $\ustar{\blank}$ has a right adjoint $\lstar{\blank}$ given by
  \[
  \begin{array}[t]{lcl}
    \lstar{\blank} & : & \Func \longrightarrow \ProfS \\
    \lstar{F} & = & ((X,Y) \mapsto i~X \to F~Y, v \mapsto F(\curry{\sigma \circ (\ev \times \id) \circ \alpha})(v)) \\
    \left(\lstar{\tau}\right)_{X,Y} & = &  \curry{\tau_{Y} \circ \ev}
  \end{array}
  \]
  Moreover, the counit of the adjunction is a natural isomorphism.
\end{theorem}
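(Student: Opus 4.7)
My plan is to construct the adjunction $\ustar{\blank} \dashv \lstar{\blank}$ explicitly by producing its unit and counit, mirroring the pattern used for the Cayley adjunction in the previous theorem. The counit is straightforward: since $\ustar{\lstar{F}}(Z) = \lstar{F}(1,Z) = (1 \to F\,Z)$ is canonically isomorphic to $F\,Z$, I set
\[
\varepsilon^{*}_F \;=\; \ev \circ \rho^{-1} \;:\; \ustar{\lstar{F}} \longrightarrow F,
\]
with evident inverse $\curry{\rho}$ sending $v \in F\,Z$ to the constant map at $v$. Naturality in $F$ and $Z$ is immediate, which already establishes the ``moreover'' part of the statement.

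For the unit $\eta^{*}_{(P,\str)} : P \to \lstar{\ustar{P}}$ at $(X,Y)$ I need a map $P(X,Y) \to (X \to P(1,Y))$. I take it to be the transpose of $(p,x) \mapsto P(x,\id_Y)(p)$, viewing $x \in X$ as a morphism $1 \to X$; equivalently, this is obtained by applying the second canonical strength $\varsigma^P$ at $(X,1,Y)$ and precomposing with the isomorphism $X \times 1 \cong X$. This formulation makes manifest the duality with the Cayley counit $\varepsilon^{!}_{(P,\str)} = P(\lambda^{-1}, \ev) \circ \str$, and is the natural starting point for the strength check below.

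The verifications then proceed in three steps: first, that $\eta^{*}_{(P,\str)}$ is a \emph{strong} natural transformation, i.e.\ compatible with $\str^P$ and with the strength $\str^{\lstar{\ustar{P}}}$ inherited from the functor strength of $\ustar{P}$; second, naturality in $P$; and third, the two triangle identities. The triangles reduce after unfolding to manipulations that use only functoriality of $P$ and $F$, the terminality of $1$, and the isomorphism $\rho$ together with currying, and are therefore mechanical.

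The main obstacle is the first step: checking that $\eta^{*}$ preserves strengths. The strength on $\lstar{\ustar{P}}$ is built from the functor strength of $\ustar{P}$, which is in turn $\str^P$ with the contravariant argument restricted to $1$. Consequently, preservation of strengths by $\eta^{*}$ amounts to showing that the action of $\str^P$ at a general $X$ is controlled by its action at $X=1$ after the Yoneda-style reindexing implicit in $\eta^{*}$. This should fall out of combining naturality of $\str^P$ in $X$ and $Y$ with the unit axiom $P(\id,\pi_1) \circ \str_{X,Y,1} = P(\pi_1,\id)$ of a strong profunctor. Once this diagram chase is in hand, everything else is routine and dual to the Cayley calculations.
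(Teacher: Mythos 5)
Your proposal matches the paper's proof: you construct the same unit $\curry{\varsigma^{P}\circ P(\rho,\id)}$ and counit $\ev\circ\rho^{-1}$ with inverse $\curry{\rho}$, and defer the triangle identities to routine unfolding, exactly as the paper does. Your additional remark that the real work lies in checking $\eta^{*}$ is a \emph{strong} natural transformation (so that it is a morphism of $\ProfS$ at all) is a correct and welcome sharpening of what the paper dismisses as ``routine.''
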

\begin{proof}
    The unit and counit of the adjunction are
  \[
  \begin{array}[t]{lcl}
    \eta^{*}_{(P, \str)} & : & P(X,Y) \longrightarrow \lstar{\ustar{P}}(X,Y) \\ 
    \eta^{*}_{(P, \str)} & = & \curry{\varsigma^P \circ P(\rho, \id)}
  \end{array}
  \qquad\qquad
  \begin{array}[t]{lcl}
    \varepsilon^{*}_{F} & : & \ustar{\lstar{F}}~Z \longrightarrow F~Z \\ 
    \varepsilon^{*}_{F} & = & \ev \circ \rho^{-1}
  \end{array}
  \]%
  where these definitions are parametric over $(X,Y)$ and $Z$
  respectively.

  Again, it is routine to show that the triangular identities
  hold. The inverse of the counit is $\varepsilon^{*{-1}}_{F} = \curry{\rho}$.
\end{proof}

Summing up, we obtain the following situation, which we depict in a diagram.
\[
  \xymatrix@C+=3cm{
    \Func  \ar@/^{5mm}/[r]^{\ladm{\blank}}_{\bot} \ar@/_{5mm}/[r]_{\lstar{\blank}}^{\bot}
    & \ProfS \ar[l]|{\;\ustar{\blank}}
  }
\]

The fact that the unit (counit) is invertible can be expressed in a
number of equivalent statements. It will prove useful to ave these
equivalences around, and so we state the following
theorem~\cite{GZ:CFHT}.
\begin{theorem}\label{nlab-equivalences}
Let $L \vdash R : \mathbb{C} \to \mathbb{D}$ be an adjunction.

The following statements are equivalent:
\begin{itemize}
\item The counit $\varepsilon : LR \to \Id$ is a natural isomorphism.
\item $R$ is fully faithful.
\item The monad associated with the adjunction is idempotent.
\end{itemize}

Dually, the following are also equivalent:
\begin{itemize}
\item The unit $\eta : \Id \to RL$ is a natural isomorphism.
\item $L$ is fully faithful.
\item The comonad associated with the adjunction is idempotent.
\end{itemize}
\end{theorem}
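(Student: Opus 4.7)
The plan is to exploit duality to reduce to the first group of three equivalences: the second group is obtained by applying the first to the opposite adjunction $R^\op \vdash L^\op : \mathbb{D}^\op \to \mathbb{C}^\op$, under which unit and counit are swapped and monads become comonads. Thus it suffices to prove the first.

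Within the first group, I would first establish (counit iso) $\iff$ ($R$ fully faithful) directly, then link both to idempotency of the monad. For the key equivalence, the hom-set action of $R$ factors through the adjunction bijection as
\[
R_{B,B'} : \Hom_{\mathbb{C}}(B,B') \xrightarrow{(-)\circ\varepsilon_B} \Hom_{\mathbb{C}}(LRB,B') \xrightarrow{\;\cong\;} \Hom_{\mathbb{D}}(RB,RB'),
\]
where the second arrow is the adjunction isomorphism and hence always a bijection. Therefore $R$ is fully faithful precisely when precomposition with $\varepsilon_B$ is a bijection for every $B'$, which by the Yoneda lemma is equivalent to $\varepsilon_B$ being an isomorphism for each $B$.

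For the third condition, the monad multiplication is $\mu = R\varepsilon L$, so (counit iso) $\Rightarrow$ (monad idempotent) is immediate. The converse relies on the classical theory that identifies idempotent monads with reflective localisations: if $T = RL$ is idempotent, the category of $T$-algebras embeds as a reflective subcategory of $\mathbb{C}$, and any adjunction presenting $T$ agrees, up to equivalence, with the reflection, forcing its counit to be an isomorphism.

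I expect the main obstacle to be precisely this last converse. Unlike the other implications, it cannot be derived by direct manipulation of natural transformations, because neither $L$ nor $R$ need reflect isomorphisms on the nose; one must invoke the structural theory of idempotent monads (as recorded in Gabriel and Zisman's book) rather than a formal calculation. The Yoneda step and the dualisation are routine by comparison.
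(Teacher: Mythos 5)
Your handling of the first two bullets is correct and is exactly the classical argument: factoring $R$ on hom-sets as precomposition with $\varepsilon_B$ followed by the adjunction bijection, then invoking Yoneda, is precisely how Gabriel--Zisman prove (counit iso) $\iff$ ($R$ fully faithful). Note that the paper offers no proof of this theorem at all --- it is stated with a citation --- so there is nothing to compare against on that front; your dualisation step and the implication $\varepsilon$ iso $\Rightarrow$ $\mu = R\varepsilon L$ iso are likewise fine.

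The gap is the converse you yourself flagged, and it is worse than hard: the implication (monad idempotent) $\Rightarrow$ (counit iso) is false, so no appeal to the structure theory of idempotent monads can close it. The flaw in your sketch is the claim that ``any adjunction presenting $T$ agrees, up to equivalence, with the reflection'': an idempotent monad $T=RL$ does exhibit its algebras as a reflective subcategory, but the comparison functor into the Eilenberg--Moore category need not be an equivalence, so nothing forces $\varepsilon$ to be invertible. Concretely, take both categories to be the poset $(\mathbb{Z},\le)$, with $L(x)=2x$ and $R(y)=\lfloor y/2\rfloor$; then $2x\le y \iff x\le \lfloor y/2\rfloor$, so $L\dashv R$, and the monad $RL=\Id$ is certainly idempotent, yet $LR(1)=0\ne 1$, so $\varepsilon$ is not invertible and $R$ is not fully faithful (it identifies $0$ and $1$). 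What is true --- and is all this paper ever uses --- is the one-way implication: $\varepsilon$ iso implies the monad (and the comonad) is idempotent; idempotency of the monad is equivalent to idempotency of the comonad and to invertibility of $R\varepsilon$, $\varepsilon L$, $L\eta$, $\eta R$, but all of these are strictly weaker than invertibility of $\varepsilon$ itself. So the third bullet of each group should be demoted from an equivalent condition to a consequence of the other two, and your proof should simply drop the attempted converse rather than try to repair it.
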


We can apply Theorem~\ref{nlab-equivalences} to the adjunctions
$\ustar{\blank} \dashv \lstar{\blank}$ and $\ladm{\blank} \dashv
\ustar{\blank}$, and obtain the following propositions.
\begin{proposition}
The functor $\ladm{\blank}$ is fully faithful and the comonad $\Box
P = \ladm{\left(\ustar{P}\right)}$ is idempotent.
\end{proposition}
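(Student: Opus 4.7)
The plan is to reduce the entire statement to a direct application of Theorem~\ref{nlab-equivalences}, since all the genuine work has already been done when establishing the Cayley adjunction. Recall that the Cayley theorem exhibits $\ladm{\blank}$ as a left adjoint of $\ustar{\blank}$ and, crucially, asserts that the unit $\eta^{!} : \Id \to \ustar{\blank} \circ \ladm{\blank}$ is a natural isomorphism (with explicit inverse $\eta^{!{-1}}_F = F(\ev \circ \rho^{-1})$).

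First I would identify which half of Theorem~\ref{nlab-equivalences} is relevant. The adjunction at hand is $\ladm{\blank} \dashv \ustar{\blank}$ with $L = \ladm{\blank}$ and $R = \ustar{\blank}$, so the associated comonad lives on $\ProfS$ and acts as $L R = \ladm{\blank} \circ \ustar{\blank}$, which is precisely $\Box$. Since it is the \emph{unit} (and not the counit) of this adjunction that is a natural isomorphism, the applicable case of Theorem~\ref{nlab-equivalences} is the dual one, whose three equivalent conditions are: the unit is a natural isomorphism, the left adjoint is fully faithful, and the associated comonad is idempotent.

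From here the proof is a one-line invocation: the unit $\eta^{!}$ being a natural isomorphism immediately yields both that $\ladm{\blank}$ is fully faithful and that the comonad $\Box P = \ladm{(\ustar{P})}$ is idempotent. There is no genuine obstacle; the only thing to double-check is bookkeeping — namely that $L$ in the dual statement of Theorem~\ref{nlab-equivalences} corresponds to $\ladm{\blank}$ here, and that the comonad it speaks of is $LR$ rather than $RL$, so that it matches the definition of $\Box$ given in the statement. Once those identifications are made, nothing further is required.
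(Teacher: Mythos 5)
Your proposal is correct and takes exactly the same route as the paper: the paper obtains this proposition by applying the dual half of Theorem~\ref{nlab-equivalences} to the Cayley adjunction $\ladm{\blank} \dashv \ustar{\blank}$, using that its unit $\eta^{!}$ was already shown to be a natural isomorphism. Your bookkeeping identifying $L = \ladm{\blank}$, $R = \ustar{\blank}$ and the associated comonad $LR = \Box$ is precisely what the paper leaves implicit.
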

\begin{proposition}
The functor $\lstar{\blank}$ is fully faithful and the monad $\Diamond P
= \lstar{\left(\ustar{P}\right)}$ is idempotent.
\end{proposition}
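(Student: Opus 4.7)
The plan is to invoke Theorem~\ref{nlab-equivalences} directly on the Kleisli adjunction $\ustar{\blank} \dashv \lstar{\blank}$, whose existence and properties were established in the theorem immediately preceding. First, I would identify the roles: with $L = \ustar{\blank} : \ProfS \to \Func$ and $R = \lstar{\blank} : \Func \to \ProfS$, the induced monad on the domain of $L$ is $R \circ L = \lstar{\blank} \circ \ustar{\blank}$, which sends $P \mapsto \lstar{(\ustar{P})}$; this is exactly the functorial assignment $\Diamond$ in the statement.

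Next, recall that in the Kleisli adjunction theorem, the counit $\varepsilon^{*}_F = \ev \circ \rho^{-1}$ was shown to be a natural isomorphism, with explicit inverse $\varepsilon^{*{-1}}_F = \curry{\rho}$. Hence the first equivalent condition in Theorem~\ref{nlab-equivalences} (``the counit is a natural isomorphism'') is satisfied for this adjunction. By the stated equivalences, the remaining two conditions in that group follow: the right adjoint $R = \lstar{\blank}$ is fully faithful, and the monad $R \circ L = \Diamond$ is idempotent.

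There is essentially no obstacle beyond correctly lining up the general categorical facts with the concrete functors of this paper; the work has already been done in verifying that $\varepsilon^{*}$ is invertible. One could also, if desired, read off the claim more directly: full faithfulness of $\lstar{\blank}$ says that $\lstar{\blank}$ reflects and preserves morphisms bijectively between $\Func(F, G)$ and $\ProfS(\lstar{F}, \lstar{G})$, which follows from the counit isomorphism via the standard adjoint-hom correspondence, and idempotency of $\Diamond$ amounts to the multiplication $\lstar{\ustar{\lstar{\ustar{P}}}} \to \lstar{\ustar{P}}$ being an isomorphism, which again follows because $\varepsilon^{*}$ evaluated at $\ustar{P}$ is an iso. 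Thus the proposition is obtained as an immediate corollary of the previous theorem together with Theorem~\ref{nlab-equivalences}.
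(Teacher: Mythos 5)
Your proposal is correct and matches the paper's own argument: the paper likewise obtains this proposition by applying Theorem~\ref{nlab-equivalences} to the Kleisli adjunction $\ustar{\blank} \dashv \lstar{\blank}$, using the fact (established in the preceding theorem) that its counit $\varepsilon^{*}$ is a natural isomorphism. Your identification of $L$, $R$, and the induced monad $RL = \Diamond$ is exactly right.
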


\section{Equivalences from Monoidal Adjunctions}\label{sec:equivalences}
We now want to relate these adjunctions to the monoidal structures
presented in Section~\ref{sec:background}. The first step is to define
natural transformations and adjunctions that are compatible with
monoidal structures~\cite{AM:MFSHA}.
\begin{definition}
  Let $(F, \gamma, \gamma_0), (G, \delta, \delta_0) : (\mathbb{C},
  \otimes, I) \to (\mathbb{D}, \oplus, J)$ be monoidal functors. A
  natural transformation $\tau : F \to G$ is \emph{monoidal} if
  \[
   \tau_I \circ \gamma_0 = \delta_0,\qquad \tau_{A\otimes B} \circ \gamma_{A,B} = \delta_{A,B} \circ (\tau_A \oplus \tau_B).
   \]%
   There is a straightforward dual definition of \emph{opmonoidal
     natural transformations} between opmonoidal functors.
\end{definition}

\begin{definition}
  An adjunction $L \dashv R : \mathbb{C} \to \mathbb{D}$ with unit
  $\eta$ and counit $\varepsilon$, between monoidal categories
  $(\mathbb{C}, \otimes, I)$ and $(\mathbb{D}, \oplus, J)$, is a
  \emph{monoidal adjunction} if $L$ and $R$ are monoidal functors and
  $\eta$ and $\varepsilon$ are monoidal natural
  transformations. Similarly, the adjunction is \emph{opmonoidal} if
  both $L$ and $R$ are opmonoidal and $\eta$ and $\varepsilon$ are
  opmonoidal natural transformations. There is an intermediate case,
  sometimes called \emph{colax-lax monoidal adjunction}, which happens
  when $L$ is opmonoidal with structure $(\phi, \phi_0)$, $R$ is
  monoidal with structure $(\gamma, \gamma_0)$ and the following
  identities hold:
  \[
    R \phi_0 \circ \eta_I = \gamma_0,\qquad R \gamma_{A,B} \circ \eta_{A \otimes B} = \phi_{L A,L B} \circ \left(\eta_A \otimes \eta_B\right).
  \]
\end{definition}

The following theorem provides a way to transport a monoidal structure
from a functor to its adjoint and obtain a colax-lax monoidal
adjunction. This is a particular case of a phenomenon known as a
\emph{doctrinal adjunction}~\cite{Kel:DA}.

\begin{proposition}\label{prop:doctrinal}
  Let $L \dashv R$ be an adjunction between monoidal categories $(C,
  \otimes, I)$ and $(D, \oplus, J)$.

  If $L$ is an opmonoidal functor, then $R$ has structure of monoidal
  functor and the adjunction is colax-lax. Moreover, if $L$ is a
  strong monoidal functor, then $\eta$ and $\varepsilon$ are monoidal
  natural transformations.

  Dually, if $R$ is a monoidal functor, then $L$ has structure of
  opmonoidal functor and the adjunction is colax-lax. Moreover, if $R$
  is a strong monoidal functor, then $\eta$ and $\varepsilon$ are
  opmonoidal natural transformations.
\end{proposition}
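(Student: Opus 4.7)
The plan is a standard doctrinal adjunction argument in the style of Kelly. From the opmonoidal structure $(\phi, \phi_0)$ of $L$, I would transport it across the adjunction to obtain a monoidal structure on $R$. Define $\gamma_0$ as the adjoint transpose of $\phi_0$, explicitly $\gamma_0 := R\phi_0 \circ \eta_I$, and define $\gamma_{X,Y}$ as the transpose of $(\varepsilon_X \oplus \varepsilon_Y) \circ \phi_{RX, RY}$, i.e.
\[
\gamma_{X,Y} := R\bigl((\varepsilon_X \oplus \varepsilon_Y) \circ \phi_{RX, RY}\bigr) \circ \eta_{RX \otimes RY}.
\]
This choice is essentially forced: the mixed colax-lax identity for $\gamma$ asserts exactly that $\gamma$ is the adjoint mate of $\phi$ (up to counits), so no other definition can work.

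The bulk of the work is then to verify that $(R, \gamma, \gamma_0)$ satisfies the monoidal functor coherence axioms (unit and associativity). By the universal property of adjoint transposition, each such axiom for $\gamma$ in $\mathbb{C}$ is equivalent to the corresponding opmonoidal coherence axiom of $(L, \phi, \phi_0)$ in $\mathbb{D}$, modulo instances of $\varepsilon$ that can be slid past by naturality and the triangle identities. This is a routine but somewhat lengthy diagram chase following the standard pattern of doctrinal adjunctions. Naturality of $\gamma$ in both arguments is likewise obtained from naturality of $\phi$, the counit, and the adjunction. The first mixed identity, $R\phi_0 \circ \eta_I = \gamma_0$, is the definition; the second follows by instantiating the definition of $\gamma$ at objects of the form $LA, LB$ and cancelling $\varepsilon_{L-} \circ L\eta_-$ via a triangle identity.

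For the strong case, if $\phi$ and $\phi_0$ are isomorphisms, then $L$ also acquires a genuine monoidal structure with comparisons $\phi^{-1}$, and so the composite $RL$ is monoidal in a natural way. The mixed colax-lax identities, together with the invertibility of $\phi$ and $\phi_0$, then rearrange into the defining equations for $\eta : \Id \to RL$ to be a monoidal natural transformation; a symmetric calculation, invoking the other triangle identity, shows that $\varepsilon : LR \to \Id$ is also monoidal. The dual statement is obtained by applying the same argument to the opposite adjunction $R^{\op} \dashv L^{\op}$ between opposite monoidal categories, so it requires no separate proof.

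The main obstacle I expect is organisational rather than conceptual: the coherence diagrams for monoidal functors have many composites, and after transposition one must keep careful track of the directions of $\phi$ (opmonoidal, outward) and $\gamma$ (monoidal, inward) and the roles of the units and counits in the two categories. No new idea beyond the mate construction is needed, but the bookkeeping is what carries the weight of the proof.
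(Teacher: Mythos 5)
The paper gives no proof of this proposition at all, deferring entirely to Kelly's doctrinal adjunction result; your sketch reconstructs exactly that standard argument, defining $\gamma_0$ and $\gamma_{X,Y}$ as the mates of $\phi_0$ and $\phi_{RX,RY}$ and transposing the coherence axioms across the adjunction. The outline is correct, the types of your proposed structure maps check out, and the approach coincides with the cited source, so nothing further is needed.
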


The following propositions state the monoidal structure of the
functors involved in the Kleisli and Cayley adjunctions.
\begin{proposition}
  Considering $\Func$ with Day convolution and $\ProfS$ with
  B\'enabou's tensor, then the functor $\ladm{\blank}$ is strong
  monoidal, $\ustar{\blank}$ is monoidal and Cayley's adjunction is
  a monoidal adjunction.
\end{proposition}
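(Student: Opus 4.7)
The plan is to establish the strong monoidal structure on $\ladm{\blank}$ directly, and then transfer it to $\ustar{\blank}$ via the doctrinal adjunction result stated in \Cref{prop:doctrinal}. The core of the argument is a coend calculation on the left adjoint side; everything else is formal.

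First, I would exhibit the monoidal comparisons for $\ladm{\blank}$. For the unit, observe that $\ladm{\inc}(X,Y) = \inc(X\to Y) = X \to Y = \Hom(X,Y)$ on the nose, and the strengths $\str^{\ladm{\inc}}$ and $\str^{\Hom}$ coincide, both reducing to pairing with the identity on the second factor; so the identity serves as $\gamma_0 : \Hom \to \ladm{\inc}$. For the tensor, I would define $\gamma_{F,G} : \ladm{F} \otimes \ladm{G} \to \ladm{F \star G}$ at $(X,Y)$ via the universal property of the coend, sending the injection at $W$ of $(v,g) \in F(X\to W) \times G(W\to Y)$ to the injection at $C = X\to W$ of the pair $(v, G(\mathrm{comp}_X)(g))$, where $\mathrm{comp}_X : (W\to Y) \to ((X\to W) \to (X\to Y))$ is the transpose of composition. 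Dinaturality in $W$ amounts to the functoriality of $G$ under pre- and post-composition.

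The main step is to show $\gamma_{F,G}$ is an isomorphism, which follows from the coend calculation
\begin{align*}
\int^{W} F(X\to W) \times G(W\to Y)
&\cong \int^{W}\!\!\int^{C} F(C) \times (C \to X \to W) \times G(W\to Y) \\
&\cong \int^{C} F(C) \times \int^{W} (C \times X \to W) \times G(W\to Y) \\
&\cong \int^{C} F(C) \times G((C \times X) \to Y) \\
&\cong \int^{C} F(C) \times G(C \to X \to Y),
\end{align*}
using, in order, coYoneda for the covariant functor $F$, Fubini combined with the currying isomorphism $(C\to X\to W) \cong (C\times X\to W)$, coYoneda applied to the contravariant functor $W \mapsto G(W\to Y)$, and currying again. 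A short chase shows the composite agrees with the explicit $\gamma_{F,G}$ above. The remaining verifications — naturality in $F,G$ and $X,Y$, compatibility of $\gamma_{F,G}$ with the two strengths (both arise from the same evaluation-and-currying data), and the pentagon and triangle coherence axioms — are routine diagram chases that reduce to the coherences of the cartesian closed structure of $\Fin$.

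Finally, since $\ladm{\blank}$ is strong monoidal and is the left adjoint in Cayley's adjunction, \Cref{prop:doctrinal} endows $\ustar{\blank}$ with the structure of a monoidal functor, whose comparison morphisms are obtained as the mates of $\gamma_0$ and $\gamma_{F,G}$ under the adjunction. Because $\ladm{\blank}$ is strong and not merely opmonoidal, the same proposition ensures that $\eta^!$ and $\varepsilon^!$ are monoidal natural transformations, so Cayley's adjunction is a monoidal adjunction. The main obstacle is the coend calculation showing that $\gamma_{F,G}$ is an isomorphism; once that is in hand, the rest is essentially formal.
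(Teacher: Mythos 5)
Your proposal is correct and follows the same route as the paper: establish strong monoidality of $\ladm{\blank}$ and then transfer the remaining claims via Proposition~\ref{prop:doctrinal}. The only difference is that the paper outsources the strong monoidality of the Cayley functor to Pastro and Street~\cite{PS:DMC}, whereas you prove it directly with the (correct) co-Yoneda/Fubini coend calculation.
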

\begin{proof}
  The strong monoidality of Cayley's functor was proven by Pastro and
  Street~\cite{PS:DMC}. Applying Proposition~\ref{prop:doctrinal}, we
  obtain the later results.
\end{proof}

In the case of the adjunction $\lstar{\blank} \dashv \ustar{\blank}$,
we obtain a ``weaker'' result.
\begin{proposition}
  Considering $\Func$ with substitution product and $\ProfS$ with
  B\'enabou's tensor, then the functor $\lstar{\blank}$ is monoidal,
  $\ustar{\blank}$ is opmonoidal and Kleisli's adjunction is a
  colax-lax adjunction.
\end{proposition}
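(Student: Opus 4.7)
The strategy is to apply Proposition~\ref{prop:doctrinal} in its dual form: construct a lax monoidal structure on the right adjoint $\lstar{\blank}$ directly, and then read off the opmonoidal structure on $\ustar{\blank}$ together with the colax-lax compatibility of unit and counit for free.

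For the monoidal structure on $\lstar{\blank}$, the unit comparison $\gamma_0 : \Hom \to \lstar{\inc}$ is the identity, since $\lstar{\inc}(X,Y) = \inc X \to \inc Y = \Hom(X,Y)$ with matching strengths $f \mapsto f \times \id$. The tensor comparison $\gamma_{F,G} : \lstar F \otimes \lstar G \to \lstar{F \circ G}$ is defined by the universal property of the coend on the left: a representative $[f, g]_W$ of $\int^W (X \to FW) \times (W \to GY)$ is sent to $x \mapsto \iota_W(f\,x, g)$ in $X \to \int^C FC \times (C \to GY) = \lstar{F\circ G}(X,Y)$, where $\iota_W$ is the coend injection. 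This morphism is in general not invertible, because coends do not distribute over exponentials; this is precisely why the resulting adjunction ends up colax-lax rather than strong monoidal.

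Next I verify in turn the strong naturality of $\gamma_{F,G}$, the unit coherence, and the associativity (pentagon) coherence. Once $\lstar{\blank}$ is a monoidal functor, Proposition~\ref{prop:doctrinal} delivers the opmonoidal structure on $\ustar{\blank}$ as the mates $\phi_0 = \varepsilon^*_{\inc} \circ \ustar{\gamma_0}$ and $\phi_{P,Q} = \varepsilon^*_{\ustar P \circ \ustar Q} \circ \ustar{\gamma_{\ustar P, \ustar Q}} \circ \ustar{(\eta^*_P \otimes \eta^*_Q)}$, together with the colax-lax compatibility of $\eta^*$ and $\varepsilon^*$. Unwinding, $\phi_0$ is the evaluation isomorphism $\ustar{\Hom}(X) = 1 \to X \cong X = \inc X$.

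The main obstacle I anticipate is the strong naturality of $\gamma_{F,G}$: one has to track how the strength $\str^{(\lstar F \otimes \lstar G)}$, pinned down on the injections $\iota_W$ by the universal property of the coend in $\ProfS$, interacts with the pointwise strength carried by the composite finitary functor $F \circ G$ inside $\lstar{F \circ G}$. The two remaining coherence diagrams and the transfer through Proposition~\ref{prop:doctrinal} are then essentially bookkeeping on coend representatives.
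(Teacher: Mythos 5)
Your proposal follows essentially the same route as the paper: establish the lax monoidal structure on $\lstar{\blank}$ (which the paper simply cites as well known, whereas you construct the comparison maps $\gamma_0$ and $\gamma_{F,G}$ explicitly and correctly) and then invoke the dual form of Proposition~\ref{prop:doctrinal} to obtain the opmonoidal structure on $\ustar{\blank}$ and the colax-lax compatibility. The extra detail you supply (the coend formula for $\gamma_{F,G}$, the mate formulas for $\phi_0$ and $\phi_{P,Q}$, and the observation that non-invertibility of $\gamma_{F,G}$ is what blocks strong monoidality) is consistent with the paper's account and with its later remarks.
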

\begin{proof}
  The monoidality of Kleisli's functor is well-known, see for
  instance~\cite{RJ:NCM}. Applying Proposition~\ref{prop:doctrinal},
  we obtain that $\lstar{\blank}$ is monoidal.
\end{proof}

We refer to the monoidal structure of $\lstar{\blank}$ by
$(\xi,\xi_0)$. What we meant above by ``weaker'' is that we do not
obtain that $\lstar{\blank}$ is strong monoidal. As we will see later,
this prevents some symmetry with the Cayley adjunction. Instead of
trying to obtain this condition by a careful analysis involving
finiteness and imposing extra conditions~\cite{ACU:MNE}, we rather
just work with monoidality of $\lstar{\blank}$, and put some extra
effort in the next section when we show that \emph{monads are
  promiscuous}.

The diagram of adjunctions in the previous section can now be
unfolded. The functors in the upper part are opmonoidal and the
functors in the lower part are monoidal.
\[
  \xymatrix@C+=3cm{
    \left(\Func, \star\right) \ar@{}[r]|{\bot} \ar@/^{5mm}/[r]^{\ladm{\blank}} & \left(\ProfS, \otimes \right) \ar@{}[r]|{\bot} \ar@/^{5mm}/[r]^{\ustar{\blank}} \ar@/^{5mm}/[l]^{\ustar{\blank}}  & \left(\Func, \circ\right) \ar@/^{5mm}/[l]^{\lstar{\blank}}
  }
  \]

We now give a categorical characterisation of monoids which are also
algebras (coalgebras) for an idempotent monad (comonad). This
definition provides a basis for presenting the right hand side of
\Cref{eq:isoidiomintro,eq:isomonadintro}.

\begin{definition}
  Given an idempotent monad (comonad) $T : \mathbb{C} \to \mathbb{C}$ on
  a monoidal category $(\mathbb{C}, \otimes, I)$, a \emph{$T$-monoid} is
  a quadruple $(M, m, e, \alpha)$ where
  \begin{itemize}
  \item $(M, m : M \otimes M \to M, e : I \to M)$ is a monoid in $\mathbb{C}$,
  \item $(M, \alpha : T M \to M)$ is a $T$-algebra ($(M, \alpha : M \to
    T M)$ is a $T$-coalgebra).
  \end{itemize}
\end{definition}

\begin{remark}
  Notice that there is no explicit coherence condition between the
  monoid structure and the $T$-algebra structure. However, the
  idempotency of the monad provides coherence between the two
  structures. For example, if $T$ is a strong monad, then a $T$-monoid
  in our sense is automatically a $T$-monoid in the sense of Fiore and
  Saville~\cite{FS:LOAS}.
\end{remark}

We can form a category of $T$-monoids by considering morphisms between
$T$-monoids which are morphisms as monoids and morphisms as
$T$-algebras between the underlying objects. However, as the monad
(comonad) is idempotent, the requirement of being a $T$-algebra
morphism becomes trivial. In fact, the $T$-algebra structure of a
$T$-monoid is like a property: an object $X$ can have at most one
$T$-algebra structure, and in case it does, it is $\eta_X^{-1}$
($\varepsilon_X^{-1}$). Despite that, we define a category of
$T$-monoids, as it will be useful to have a concrete name for it.

\begin{definition}
  Given an idempotent monad (comonad) $T$ on $\mathbb{C}$, the
  \emph{category of $T$-monoids}, $\Mon{T}$, consists of
  \begin{itemize}
  \item {\it Objects:} $T$-monoids.
  \item {\it Morphisms:} $T$-monoid homomorphisms, i.e. morphisms in
    $\mathbb{C}$ that are both monoid morphisms and $T$-algebra
    (coalgebra) morphisms at the same time.
  \end{itemize}
\end{definition}

From the comment above, it is clear that $\Mon{T}$ is equivalent to
the full subcategory of $\Mon{\mathbb{C}}$ such that the underlying
object of the monoid has a $T$-algebra structure.

We are now ready to revisit the results of Lindley et al., and obtain
their categorical analogues. We do so by postulating the equalities in
\Cref{eq:isoidiomintro,eq:isomonadintro} as equivalences between
categories.

\subsection{Idioms are Oblivious}

As already described, Lindley et al.~\cite{LWY:IOAMMP} characterise
idioms as the arrows in which the isomorphism
\[
\xymatrix@C+=2cm{
A\leadsto B \ar@/^5mm/[r]^-{\ensuremath{\Varid{delay}}} \ar@{}[r]|-{\quad\cong} & 1\leadsto (A \to B) \ar@/^5mm/[l]^-{\ensuremath{\Varid{force}}}
}
\]%
holds. The right to left morphism is available in any arrow as
\begin{hscode}\SaveRestoreHook
\column{B}{@{}>{\hspre}l<{\hspost}@{}}%
\column{E}{@{}>{\hspre}l<{\hspost}@{}}%
\>[B]{}\Varid{force}\mathbin{::}(\mathrm{1}\ensuremath{\leadsto}(\Varid{a}\to \Varid{b}))\to (\Varid{a}\ensuremath{\leadsto}\Varid{b}){}\<[E]%
\\
\>[B]{}\Varid{force}\mathrel{=}\lambda \Varid{f}.\ \Varid{arr}\;(\lambda \Varid{x}.\ ((),\Varid{x}))\ensuremath{\ggg}\Varid{first}\;\Varid{f}\ensuremath{\ggg}\Varid{arr}\;(\lambda (\Varid{f},\Varid{a}).\ \Varid{f}\;\Varid{a}){}\<[E]%
\ColumnHook
\end{hscode}\resethooks
From the arrow laws, it can be checked that \ensuremath{\Varid{force}}, as defined by
Lindley et al., is exactly the morphism given by the counit
\[
\varepsilon^{!}_{(P, \str^P)} : \ladm{\ustar{P}}(X,Y) = P(1, X \to Y) \longrightarrow P(X,Y) 
\]%
of the comonad $\Box$. Therefore, the isomorphism described in
\Cref{eq:isoidiomintro} can be expressed as a coalgebra for the
idempotent comonad $\Box$, i.e. an inverse for $\varepsilon^{!}$. The
other part of the equation, the arrow, might be seen as a monoid in
$\ProfS$ as we described in \Cref{sec:background}, thus obtaining that
the equivalent presentation of the equation:
\[
\textrm{Monoid in $\Func_\star$} = \textrm{$\Box$-monoid in $\ProfS$}\\
\]
To show this equivalence, we must be able to move monoids from one
category to another. Doing this through a monoidal functor is covered
by the following classical theorem (see for instance~\cite{AM:MFSHA}):
\begin{theorem}\label{thm:monoidal-preserves-monoids}
  A monoidal functor $(F,\phi) : (\mathbb{C},\otimes,I) \to
  (\mathbb{D},\oplus,J)$ can be lifted to a functor
  \[
  \begin{array}[t]{lcl}
    F^\bullet & : & \Mon{\mathbb{C}} \longrightarrow \Mon{\mathbb{D}}  \\
            &   & (M, m, e) \mapsto (F M, F m \circ \phi, F e \circ \phi_0)
  \end{array}
\]
\end{theorem}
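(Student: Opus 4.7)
The plan is to show three things: that the proposed triple $(FM, Fm\circ\phi, Fe\circ\phi_0)$ really is a monoid in $(\mathbb{D},\oplus,J)$, that $F$ applied to a monoid homomorphism is again a monoid homomorphism, and that the resulting assignment is functorial. Functoriality on morphisms is immediate because $F^\bullet$ acts on underlying morphisms exactly as $F$ does, so it preserves identities and composition; the only real content lies in the first two points.

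For the monoid axioms, I would verify each of the three equations $\lambda = m' \circ (e' \oplus \id)$, $\rho = m' \circ (\id \oplus e')$ and $m' \circ (m' \oplus \id) \circ \alpha = m' \circ (\id \oplus m')$, where $m' = Fm\circ\phi$ and $e' = Fe\circ\phi_0$. Each is a short diagram chase combining three ingredients: (i) naturality of $\phi$, which moves morphisms of the shape $Ff \oplus Fg$ past $\phi$ and turns them into $F(f\otimes g)$; (ii) the hexagon and two triangle coherence axioms of the monoidal functor $(F,\phi,\phi_0)$, which relate $\phi$ and $\phi_0$ to the structural isomorphisms $\lambda$, $\rho$, $\alpha$ on both sides; and (iii) functoriality of $F$ together with the monoid laws for $(M,m,e)$ in $\mathbb{C}$, which allow one to collapse composites like $F(m\circ(e\otimes\id))$ into $F\lambda$. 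The three verifications have the same shape: push all the $\phi$'s to the outside using naturality, apply a coherence axiom to rewrite the outermost composite $\phi\circ(\phi_0\oplus\id)$, $\phi\circ(\id\oplus\phi_0)$ or $\phi\circ(\phi\oplus\id)\circ\alpha_{\oplus}$ in terms of $F$ applied to a structural map, then use the source monoid law inside $F$.

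For the morphism clause, given a monoid homomorphism $f\colon (M,m_M,e_M)\to (N,m_N,e_N)$, I would show that $Ff$ satisfies the two homomorphism equations with respect to the transported structures. Unit compatibility $Ff\circ Fe_M\circ\phi_0 = Fe_N\circ\phi_0$ follows at once by applying $F$ to $f\circ e_M = e_N$. For multiplicativity, I would compute $Ff\circ Fm_M\circ\phi = F(f\circ m_M)\circ\phi = F(m_N\circ(f\otimes f))\circ\phi$, then use naturality of $\phi$ to rewrite this as $Fm_N\circ\phi\circ(Ff\oplus Ff)$, which is exactly the required equation.

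The main obstacle, such as it is, is purely bookkeeping: keeping the three monoidal structures in $\mathbb{C}$, the three in $\mathbb{D}$ and the two layers of comparison maps $\phi,\phi_0$ aligned in each diagram, and invoking the correct coherence axiom for $(F,\phi,\phi_0)$ at each step. There is no conceptual difficulty and no use of any hypothesis beyond plain monoidality of $F$ (no strength, invertibility, or strongness is needed), which is why this is the appropriate lemma to cite when lifting idioms, arrows and monads along the adjoints constructed earlier.
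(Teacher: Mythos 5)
Your proposal is correct. The paper itself gives no proof of this theorem---it is stated as a classical result with a pointer to Aguiar and Mahajan---and your outline is exactly the standard argument one finds there: naturality of $\phi$ to commute $Ff \oplus Fg$ past the comparison maps, the unit and associativity coherence axioms of the lax monoidal functor to absorb the structural isomorphisms, the monoid laws in $\mathbb{C}$ under $F$, and the easy check that homomorphisms are preserved. Nothing is missing, and you correctly note that only lax monoidality is used.
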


A categorical explanation of the equation is given by the following
theorem, for which we give a direct proof using the theorem above.
\begin{theorem}\label{thm:idioms}
  Let $L \dashv R : \left(\mathbb{C}, \otimes, I\right) \to
  \left(\mathbb{D}, \oplus, J\right)$ be an adjunction in which $\eta$
  is an isomorphism, and $L$ is strong monoidal. Then,
  $\Mon{\mathbb{C}}$ is equivalent to $\Mon{LR}$.
\end{theorem}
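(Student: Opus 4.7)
The plan is to transport monoids along the adjunction and exploit the idempotency of the comonad $LR$ to identify the image of the lifted left adjoint with $\Mon{LR}$. First I would invoke Proposition~\ref{prop:doctrinal}: because $L$ is strong monoidal, $R$ inherits a monoidal structure making the adjunction monoidal, with $\eta$ and $\varepsilon$ both monoidal natural transformations. Theorem~\ref{thm:monoidal-preserves-monoids} then yields lifted functors $L^\bullet : \Mon{\mathbb{C}} \to \Mon{\mathbb{D}}$ and $R^\bullet : \Mon{\mathbb{D}} \to \Mon{\mathbb{C}}$, and monoidal naturality of $\eta$ upgrades the hypothesis that $\eta$ is a natural isomorphism into a natural \emph{monoid} isomorphism $\Id_{\Mon{\mathbb{C}}} \cong R^\bullet L^\bullet$.

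Next I would refine $L^\bullet$ into a functor $\Phi : \Mon{\mathbb{C}} \to \Mon{LR}$. For $(M, m, e) \in \Mon{\mathbb{C}}$, the object $LM$ carries a canonical $LR$-coalgebra structure $L\eta_M : LM \to LRLM$, because the triangle identity $\varepsilon_{LM} \circ L\eta_M = \id$ together with $L\eta_M$ being invertible forces $L\eta_M = \varepsilon_{LM}^{-1}$, which is exactly the coalgebra structure prescribed by the idempotency of $LR$ (Theorem~\ref{nlab-equivalences}). In the opposite direction, I would define $\Psi : \Mon{LR} \to \Mon{\mathbb{C}}$ by $\Psi(N, m_N, e_N, \alpha) = R^\bullet(N, m_N, e_N)$, discarding the (unique) coalgebra datum.

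To verify the equivalence, $\Psi\Phi$ is naturally isomorphic to the identity via $\eta$, already shown to be a monoid isomorphism. The composite $\Phi\Psi$ sends $(N, \alpha)$ to $L^\bullet R^\bullet N$, and $\varepsilon_N : LRN \to N$ is invertible precisely because $N$ admits an $LR$-coalgebra structure (so it lies in the essential image of $L$); monoidality of $\varepsilon$ makes it a monoid morphism, supplying the required natural isomorphism. No extra verification is needed on the coalgebra side, since, as observed in the remark preceding the theorem, the $LR$-coalgebra structure on a $T$-monoid is a property rather than additional data, so every monoid isomorphism between objects admitting such a structure is automatically a morphism in $\Mon{LR}$.

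The main obstacle I anticipate is bookkeeping around Proposition~\ref{prop:doctrinal}: one has to track the induced monoidal structure on $R$ (built from that of $L$ together with $\eta$ and $\varepsilon$) in order to confirm that $R^\bullet$ and the transported multiplications and units intertwine correctly with $\eta$ and $\varepsilon$. Once monoidality of $\eta$ and $\varepsilon$ has been established at this level, the idempotency of $LR$ collapses all remaining coherences and the equivalence follows formally.
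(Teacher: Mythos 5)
Your proposal is correct and follows essentially the same route as the paper: apply Proposition~\ref{prop:doctrinal} to make the adjunction monoidal, lift $L$ and $R$ to monoids via Theorem~\ref{thm:monoidal-preserves-monoids}, equip $LM$ with the coalgebra $L\eta_M = \varepsilon_{LM}^{-1}$, and witness the equivalence by $\eta$ on one side and $\varepsilon$ (equivalently $\alpha$) on the other, with the coalgebra datum handled as a property. The only cosmetic difference is that the paper writes the second natural isomorphism as $\alpha : \Id_{\Mon{LR}} \to L^\bullet R^\bullet$ rather than $\varepsilon$ in the opposite direction.
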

\begin{proof}
  We give a direct proof of the equivalence. Using
  Proposition~\ref{prop:doctrinal}, we obtain a monoidal structure on
  $R$, and moreover, that the adjunction $L \dashv R$ is monoidal. As
  $L$ and $R$ are monoidal functors, they lift monoids, and give the
  following functors which form the equivalence.
\[
  \begin{array}[t]{lcl}
    L^\bullet & : & \Mon{\mathbb{C}} \longrightarrow \Mon{LR}  \\
            &   & (M, m, e) \mapsto (L M, L m \circ \phi, L e \circ \phi_0, L\eta)
  \end{array}
\]
\[
  \begin{array}[t]{lcl}
    R^\bullet & : &  \Mon{LR} \longrightarrow \Mon{\mathbb{C}}  \\
            &   & (M, m, e, \alpha) \mapsto (R M, R m \circ \psi, R e \circ \psi_0)
  \end{array}
\]
In addition, we need to provide natural isomorphisms $\beta :
\Id_{\Mon{LR}} \cong L^\bullet \circ R^\bullet$ and $\gamma :
\Id_{\Mon{\mathbb{C}}} \cong R^\bullet \circ L^\bullet$. For $\gamma$
we simply propose $\eta$, while for $\beta$ we use the $LR$-coalgebra.
\[
  \begin{array}[t]{l}
    \gamma_{(M,m,e)} : (M,m,e) \longrightarrow R^\bullet L^\bullet (M,m,e)  \\
    \gamma_{(M,m,e)} = \eta
  \end{array}\qquad
  \begin{array}[t]{l}
    \beta_{(M,m,e,\alpha)} : (M,m,e,\alpha) \longrightarrow L^\bullet R^\bullet (M,m,e,\alpha) \\
    \beta_{(M,m,e,\alpha)} = \alpha
  \end{array}
\]
That $\beta$ is well-defined comes from $\alpha = \varepsilon^{-1}$,
$\varepsilon$ being a monoidal natural transformation and $(M,
\alpha)$ being a coalgebra for $LR$.
\end{proof}

When we apply this theorem to the case $\mathbb{C} = \Func$,
$\mathbb{D} = \ProfS$, $L = \ladm{\blank}$ and $R = \ustar{\blank}$,
we obtain the semantic account of \Cref{eq:isoidiomintro}:
\[
\Mon{\Func}~\mbox{and}~\Mon{\Box}~\mbox{are equivalent categories.}
\]

We close the discussion on idioms by making the following
observation. The monoidal structure given by the Day convolution on
$\Func$ might be recovered from the one in $\ProfS$. As the comonad
$\Box$ is monoidal (being the composition of two monoidal
functors), we could lift the monoidal structure from $\ProfS$ to the
category of $\Box$-coalgebras~\cite{Moe:MTC}, which, as we noted
before, is equivalent to the category of finitary functors.

\subsection{Monads are Promiscuous}

In the case of monads, Lindley et al.~\cite{LWY:IOAMMP} characterise
them as the arrows in which the isomorphism
\[
\xymatrix@C+=2cm{
A\leadsto B \ar@/^5mm/[r]^-{\ensuremath{\Varid{eval}}} \ar@{}[r]|-{\quad\cong} & A\to (1 \leadsto B) \ar@/^5mm/[l]^-{\ensuremath{\Varid{lave}}}
}
\]%
holds. Instead of postulating an inverse to a morphism from left to
right, they follow Hughes~\cite{Hug:GMA} and ask for a mapping
\begin{hscode}\SaveRestoreHook
\column{B}{@{}>{\hspre}l<{\hspost}@{}}%
\column{E}{@{}>{\hspre}l<{\hspost}@{}}%
\>[B]{}\Varid{app}\mathbin{::}(\Varid{a}\ensuremath{\leadsto}\Varid{b},\Varid{a})\ensuremath{\leadsto}\Varid{b}{}\<[E]%
\ColumnHook
\end{hscode}\resethooks
which has to satisfy a set of three laws
\begin{align*}
\ensuremath{\Varid{first}\;(\Varid{arr}\;(\lambda \Varid{x}.\ \Varid{arr}\;(\lambda \Varid{y}.\ (\Varid{x},\Varid{y}))))\ensuremath{\ggg}\Varid{app}} &= \ensuremath{\Varid{arr}\;\Varid{id}} \\
\ensuremath{\Varid{first}\;(\Varid{arr}\;(\Varid{g}\ensuremath{\ggg}))\ensuremath{\ggg}\Varid{app}} &= \ensuremath{\Varid{second}\;\Varid{g}\ensuremath{\ggg}\Varid{app}} \\
\ensuremath{\Varid{first}\;(\Varid{arr}\;(\ensuremath{\ggg}\Varid{h}))\ensuremath{\ggg}\Varid{app}} &= \ensuremath{\Varid{app}\ensuremath{\ggg}\Varid{h}}
\end{align*}
These are the laws originally proposed by Hughes.  Notice that such a
mapping \ensuremath{\Varid{app}} might not be expressed in our framework. Naively, it
would be an element of $P(P(A,B)\times A, B)$, which is not
well-defined as $P(A,B)$ is not necessarily a finite set.

It can be shown, using the arrow laws, that giving a mapping \ensuremath{\Varid{app}}
which satisfies the three axioms above is equivalent to giving an
inverse to the mapping
\begin{hscode}\SaveRestoreHook
\column{B}{@{}>{\hspre}l<{\hspost}@{}}%
\column{E}{@{}>{\hspre}l<{\hspost}@{}}%
\>[B]{}\Varid{eval}\mathbin{::}(\Varid{a}\ensuremath{\leadsto}\Varid{b})\to (\Varid{a}\to (\mathrm{1}\ensuremath{\leadsto}\Varid{b})){}\<[E]%
\\
\>[B]{}\Varid{eval}\mathrel{=}\lambda \Varid{c}.\ \lambda \Varid{a}.\ \Varid{arr}\;(\lambda ().\ \Varid{a})\ensuremath{\ggg}\Varid{c}{}\<[E]%
\ColumnHook
\end{hscode}\resethooks
This map is already present in the translation of Lindley et al.\ in
the proof of the equivalence, and corresponds to the morphism given by
the unit
\[
\eta^{*}_{(P, \str^P)} : P(X,Y) \longrightarrow \lstar{\ustar{P}}(X,Y) = X \to P(1, Y)
\]%
of the monad $\Diamond$. Therefore, the isomorphism described in
\Cref{eq:isomonadintro} can be expressed as a algebra for
the idempotent monad $\Diamond$, i.e. an inverse for $\eta^{*}$. The arrow
part of the equation might be seen as a monoid in $\ProfS$, thus
obtaining that the equivalent presentation:
\[
\textrm{Monoid in $\Func_\circ$} = \textrm{$\Diamond$-monoid in $\ProfS$}\\
\]

To prove the equivalence induced by \Cref{eq:isomonadintro}, we might
want to use a kind-of-dual theorem to
Theorem~\ref{thm:idioms}. However, notice that this time the left
adjoint functor is not strong monoidal, but just opmonoidal, which
means that we cannot easily map monoids to monoids. We need to ask for
additional conditions to ensure that the left adjoint lifts
monoids. This case is covered by the following lemma, which is due to
Porst and Street~\cite{PS:GSD}.
\begin{lemma}\label{lem:oplax-preserves}
Let $(L, \varphi, \varphi_0) : (\mathbb{C}, \otimes, I) \to (\mathbb{D}, \oplus,
J)$ be an opmonoidal functor and $(C, m, e)$ a monoid in
$\mathbb{C}$. If
\begin{itemize}
\item $\varphi_0 : L I \to J$,
\item $\varphi_{C,C} : L (C \otimes C) \to L C \oplus L C$,
\item $\varphi_{C\otimes C, C} : L ((C \otimes C) \otimes C)  \to L (C \otimes C) \oplus L C$
\end{itemize}
are invertible, then $(L C, L m \circ \left(\varphi_{C,C}\right)^{-1}, L
e \circ \left(\varphi_0\right)^{-1})$ is a monoid in $\mathbb{D}$.
\end{lemma}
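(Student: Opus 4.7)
The plan is to verify the three monoid axioms for the proposed triple $(LC,\, Lm \circ \varphi_{C,C}^{-1},\, Le \circ \varphi_0^{-1})$ in $(\mathbb{D}, \oplus, J)$, namely the left unit, right unit and associativity equations. The key preliminary observation is that the three invertibility hypotheses already force three more. From the opmonoidal left-unit coherence $\lambda_{LC} \circ (\varphi_0 \oplus \id) \circ \varphi_{I,C} = L\lambda_C$, together with the fact that $\lambda_{LC}$, $\varphi_0 \oplus \id$ and $L\lambda_C$ are all isomorphisms (the last because $\lambda_C$ is one and $L$ preserves isos), it follows that $\varphi_{I,C}$ is invertible; symmetrically, $\varphi_{C,I}$ is invertible; and the opmonoidal associativity coherence, combined with the invertibility of $\varphi_{C,C}$, $\varphi_{C\otimes C, C}$, $L\alpha$ and $\alpha_{LC}$, yields that $\varphi_{C, C \otimes C}$ is invertible as well. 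These are the only components of $\varphi$ the proof needs to invert.

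For the left unit law, I would combine naturality of $\varphi$ with the opmonoidal unit coherence. Naturality applied to $e \otimes \id_C$ gives $\varphi_{C,C} \circ L(e \otimes \id) = (Le \oplus \id) \circ \varphi_{I,C}$, and substituting this into the expansion of $m' \circ (e' \oplus \id)$ cancels the pair $\varphi_{C,C}^{-1} \circ \varphi_{C,C}$, collapsing the composite to $L(m \circ (e \otimes \id)) \circ \varphi_{I,C}^{-1} \circ (\varphi_0^{-1} \oplus \id)$. Using the monoid equation $m \circ (e \otimes \id) = \lambda_C$ in $\mathbb{C}$ and then replacing $L\lambda_C$ by $\lambda_{LC} \circ (\varphi_0 \oplus \id) \circ \varphi_{I,C}$ via the coherence leaves exactly $\lambda_{LC}$. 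The right unit law follows by the dual calculation involving $\rho$ and $\varphi_{C,I}$.

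The main obstacle is associativity, since it involves three distinct components of $\varphi$ together with the associator. Expanding $m' \circ (m' \oplus \id) \circ \alpha_{LC}$ and applying naturality of $\varphi$ to $m \otimes \id$ turns it into $L(m \circ (m \otimes \id)) \circ \varphi_{C \otimes C, C}^{-1} \circ (\varphi_{C,C}^{-1} \oplus \id) \circ \alpha_{LC}$. Rewriting the monoid associativity as $m \circ (m \otimes \id) = m \circ (\id \otimes m) \circ \alpha^{-1}$ extracts a factor $L\alpha^{-1}$, which together with the tail can be replaced, via the inverted form of the opmonoidal associativity coherence (legitimate thanks to the derived invertibility of $\varphi_{C, C \otimes C}$), by $\varphi_{C, C \otimes C}^{-1} \circ (\id \oplus \varphi_{C,C}^{-1}) \circ \alpha_{LC}^{-1}$. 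The two copies of $\alpha_{LC}$ cancel, and a final application of naturality on $\id \otimes m$ transforms the result into $m' \circ (\id \oplus m')$, as required. The only real subtlety is bookkeeping: checking that every $\varphi$-component inverted along the way is among the three assumed invertible or the three just derived to be so.
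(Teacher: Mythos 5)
Your proof is correct. Note, though, that the paper does not actually prove this lemma: it is stated as a citation to Porst and Street, so there is no in-paper argument to compare against. Your direct verification supplies exactly what is needed, and its one genuinely non-obvious ingredient is the right one: the observation that the three assumed invertibilities propagate, via the opmonoidal coherence axioms, to invertibility of $\varphi_{I,C}$, $\varphi_{C,I}$ and $\varphi_{C,C\otimes C}$ (each being expressible as a composite of the assumed inverses with isomorphisms such as $\lambda$, $\rho$, $\alpha$ and $L\alpha$). This is precisely why the hypotheses of the lemma can be limited to those three components, and it is the same mechanism that makes the Porst--Street result work. The remaining steps --- cancelling $\varphi_{C,C}^{-1}\circ\varphi_{C,C}$ after applying naturality of $\varphi$ to $e\otimes\id$ and $m\otimes\id$, rewriting $L\lambda_C$ and $L\alpha$ through the coherence squares, and a final naturality step on $\id\otimes m$ to reassemble $m'\circ(\id\oplus m')$ --- are routine and correctly executed. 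One small presentational caveat: the letter $C$ is used both for the monoid object and (implicitly) as a dummy object in the coherence diagrams, so when you instantiate the associativity hexagon you should make explicit that all three slots are set to the monoid object; as written your bookkeeping is consistent, so this is cosmetic rather than a gap.
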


Using this lemma, we can prove the following theorem, which underlies
the basic structure of \Cref{eq:isomonadintro}.
\begin{theorem}
  Let $L \dashv R : \left(\mathbb{C}, \otimes, I\right) \to
  \left(\mathbb{D}, \oplus, J\right)$ be an adjunction in which
  $\varepsilon$ is an isomorphism, $(R, \gamma, \gamma_0)$ is a monoidal functor
  and the morphisms
  \[
  L\gamma_0 : L I \to LR J,\qquad\qquad
  L\left(\gamma \circ (\eta \otimes \id)\right) : L(C \otimes R D) \to LR (L C \oplus D)
  \]%
  are invertible. Then, $\Mon{\mathbb{D}}$ is equivalent to $\Mon{RL}$.
\end{theorem}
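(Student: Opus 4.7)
The overall strategy mirrors the proof of Theorem~\ref{thm:idioms}, but with the twist that the left adjoint $L$ is now only opmonoidal (via Proposition~\ref{prop:doctrinal}), so it cannot transport monoids directly. My plan is to patch this up with Lemma~\ref{lem:oplax-preserves}, using the hypothesised invertibility conditions exactly where the lemma requires. Since $\varepsilon$ is an isomorphism, Theorem~\ref{nlab-equivalences} tells us that $L$ is fully faithful and that the monad $RL$ is idempotent; in particular, every $RL$-algebra structure $\alpha : RLM \to M$ is forced to equal $\eta_M^{-1}$, and dually $\eta_M$ is invertible exactly when $M$ carries an $RL$-algebra.

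Getting $R^\bullet : \Mon{\mathbb{D}} \to \Mon{RL}$ is immediate: since $R$ is monoidal, Theorem~\ref{thm:monoidal-preserves-monoids} lifts a monoid $(N,m,e)$ to $(RN, Rm \circ \gamma_{N,N}, Re \circ \gamma_0)$, which I equip with the $RL$-algebra structure $R\varepsilon_N = \eta_{RN}^{-1}$.

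The main obstacle is building $L^\bullet : \Mon{RL} \to \Mon{\mathbb{D}}$. By Proposition~\ref{prop:doctrinal}, $L$ has opmonoidal structure $\varphi_0 = \varepsilon_J \circ L\gamma_0$ and $\varphi_{A,B} = \varepsilon_{LA \oplus LB} \circ L\bigl(\gamma_{LA,LB} \circ (\eta_A \otimes \eta_B)\bigr)$. To apply Lemma~\ref{lem:oplax-preserves} at an $RL$-monoid $(M,m,e,\alpha)$ I must show that $\varphi_0$, $\varphi_{M,M}$ and $\varphi_{M \otimes M, M}$ are invertible. The unit case is immediate from the hypothesis on $L\gamma_0$ and the invertibility of $\varepsilon_J$. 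For $\varphi_{M,M}$ the trick is to factor $\eta_M \otimes \eta_M = (\eta_M \otimes \id_{RLM}) \circ (\id_M \otimes \eta_M)$, yielding
\[
\varphi_{M,M} = \varepsilon_{LM \oplus LM} \circ L\bigl(\gamma_{LM,LM} \circ (\eta_M \otimes \id_{RLM})\bigr) \circ L(\id_M \otimes \eta_M),
\]
in which the middle factor is invertible by the hypothesis instantiated at $C = M$ and $D = LM$; $\varepsilon_{LM \oplus LM}$ is iso; and $L(\id_M \otimes \eta_M)$ is iso because $\eta_M = \alpha^{-1}$ is. The same factorisation, instantiated with $C = M \otimes M$ and $D = LM$, handles $\varphi_{M \otimes M, M}$. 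The lemma then produces a monoid $(LM, Lm \circ \varphi_{M,M}^{-1}, Le \circ \varphi_0^{-1})$ in $\mathbb{D}$, defining $L^\bullet$.

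Finally, I exhibit the natural isomorphisms by setting $\beta_N = \varepsilon_N^{-1} : N \to LRN$ and $\delta_{(M,m,e,\alpha)} = \eta_M : M \to RLM$; both are invertible by the observations above. That they are monoid morphisms follows from the colax-lax identities of Proposition~\ref{prop:doctrinal} ($R\varphi_0 \circ \eta_I = \gamma_0$ and $R\varphi_{A,B} \circ \eta_{A \otimes B} = \gamma_{LA,LB} \circ (\eta_A \otimes \eta_B)$), which are precisely what is needed to turn $\eta$ and $\varepsilon^{-1}$ into monoid morphisms between the lifted structures; compatibility of $\delta$ with the $RL$-algebras is automatic from the idempotency of $RL$.
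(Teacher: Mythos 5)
Your proposal is correct and follows essentially the same route as the paper: lift monoids along $R$ via Theorem~\ref{thm:monoidal-preserves-monoids}, invoke Lemma~\ref{lem:oplax-preserves} for $L$ by showing $\varphi_0$, $\varphi_{M,M}$ and $\varphi_{M\otimes M,M}$ are invertible from the hypotheses together with $\eta_M^{-1}=\alpha$, and use the (co)unit and the algebra structure for the natural isomorphisms. Your factorisation $\varphi_{M,M}=\varepsilon\circ L\bigl(\gamma\circ(\eta\otimes\id)\bigr)\circ L(\id\otimes\eta)$ is exactly the paper's explicit inverse $L(\id\otimes\alpha)\circ\inv{L\left(\gamma \circ (\eta \otimes \id)\right)}\circ\varepsilon^{-1}$ read backwards, so the two arguments coincide.
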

\begin{proof}
  Again, we give a concrete description of the equivalence. As
  $R$ is monoidal, it preserves monoids, so we use
  Theorem~\ref{thm:monoidal-preserves-monoids} to lift it to monoids.
  \[
  \begin{array}[t]{lcl}
    R^\bullet & : & \Mon{\mathbb{D}} \longrightarrow \Mon{RL} \\ & &
    (M, m, e) \mapsto (R M, R m \circ \gamma, R e \circ \gamma_0, R\varepsilon)
  \end{array}
  \]
  The inverse function this time is a bit more complex, as $L$ is not
  monoidal, and therefore does not automatically preserve monoids. We
  use Lemma~\ref{lem:oplax-preserves} to transport monoids. For a
  $M$-monoid $(M, m, e, \alpha)$, we can construct the required inverses as
  \begin{itemize}
  \item $\inv{\varphi_0} = \inv{L\gamma_0} \circ \varepsilon^{-1}$,
  \item $\inv{\varphi_{C,C}} = L(\id \otimes \alpha) \circ \inv{L\left(\gamma \circ (\eta \otimes \id)\right)} \circ \varepsilon^{-1}$,
  \item $\inv{\varphi_{C\otimes C, C}}= L(\id \otimes \alpha) \circ \inv{L\left(\gamma \circ (\eta \otimes \id)\right)} \circ \varepsilon^{-1}$.
  \end{itemize}

  We can apply Lemma~\ref{lem:oplax-preserves}, and obtain a functor $L^\bullet$:
  \[
  \begin{array}[t]{lcl}
    L^\bullet & : & \Mon{RL} \longrightarrow \Mon{\mathbb{D}}  \\
            &   & (M, m, e, \alpha) \mapsto (L M, L m \circ \left(\varphi_{C,C}\right)^{-1}, L e \circ \left(\varphi_0\right)^{-1})
  \end{array}
  \]%
  Notice that each $\left(\varphi_{C,C}\right)^{-1}$ depends on an
  $\alpha$. To prove the equivalence, we need to provide natural
  isomorphisms $\gamma : \Id_{\Mon{\mathbb{D}}} \cong L^\bullet \circ
  R^\bullet$ and $\beta : R^\bullet \circ L^\bullet \cong
  \Id_{\Mon{RL}}$. We propose the following.
  \[
  \begin{array}[t]{l}
    \beta_{(M,m,e)} : L^\bullet R^\bullet (M,m,e) \longrightarrow (M,m,e) \\
    \beta_{(M,m,e)} = \varepsilon
  \end{array}\qquad
  \begin{array}[t]{l}
    \gamma_{(M,m,e,\alpha)} : R^\bullet L^\bullet (M,m,e,\alpha) \longrightarrow (M,m,e,\alpha) \\
    \gamma_{(M,m,e,\alpha)} = \alpha
  \end{array}
  \]%
  These are invertible, as from assumptions $\varepsilon$ is
  invertible and $\alpha$ has $\eta$ as its inverse.
\end{proof}

The following proposition provides the missing hypotheses to use the
theorem above in the case we are interested.
\begin{proposition}
  The morphisms
  \[
  \ustar{\xi_0} : \ustar{\Hom} \to \ustar{\left(\lstar{\inc}\right)},\qquad\qquad
  \ustar{\left(\xi \circ (\eta^{*} \otimes \id)\right)} : \ustar{\left(P \otimes \lstar{F}\right)} \to \ustar{\left(\lstar{\left(\ustar{P} \oplus F\right)}\right)}
  \]%
  are invertible.
\end{proposition}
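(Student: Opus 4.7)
The plan is to unfold both morphisms explicitly, using the fact that $\ustar{\blank}$ means ``fix the contravariant argument at $1$'', and then observe that at $X = 1$ the strengths collapse so that source and target are literally the same coend (up to canonical isomorphisms).

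For the first morphism I would compute $\ustar{\Hom}(Y) = \Hom(1, Y) \cong Y$ and $\ustar{(\lstar{\inc})}(Y) = (1 \to \inc Y) \cong Y$. Unfolding the definition of $\xi_0$ (which arises by Proposition~\ref{prop:doctrinal} from the opmonoidal structure of $\ustar{\blank}$, or equivalently directly from the monoidality of $\lstar{\blank}$ obtained via $\eta^{*}$ applied to the unit $\Hom$), I would check that after restricting the first argument to $1$ the map is, up to the canonical isomorphism $Y \cong (1 \to Y)$, the identity, hence invertible.

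For the second morphism the computation gives
\[
\ustar{(P \otimes \lstar{F})}(Y) \;=\; \int^{W} P(1, W) \times (W \to F Y),
\]
\[
\ustar{(\lstar{(\ustar{P} \circ F)})}(Y) \;=\; 1 \to \int^{C} P(1, C) \times (C \to F Y) \;\cong\; \int^{C} P(1, C) \times (C \to F Y),
\]
so source and target are manifestly the same coend after the iso $1 \to (\blank) \cong (\blank)$. The work is then to check that the composite $\xi \circ (\eta^{*} \otimes \id)$, read off at $X=1$, realises this identification. Here I would use the explicit formula $\eta^{*}_{(P,\str)} = \curry{\varsigma^P \circ P(\rho, \id)}$ together with the remark from Section~\ref{ss:profunctors-and-strength} that $\varsigma^{P}$ is invertible whenever its contravariant component is the terminal object; this makes $\ustar{\eta^{*}_P}$ an isomorphism. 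Then $\xi_{\ustar{P}, F}$ sends a representative $(h, k) \in (1 \to P(1,W)) \times (W \to F Y)$ to the function $x \mapsto \iota_W(h(x), k)$, and pre-composing with $\ustar{(\eta^{*} \otimes \id)}$ turns a representative $(p, k) \in P(1, W) \times (W \to FY)$ into $(\lambda().\iota_W(p, k))$, which under the iso $1 \to (\blank) \cong (\blank)$ is just $\iota_W(p, k)$, i.e.\ the identity on the coend.

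The main obstacle I expect is not conceptual but bookkeeping: carefully tracking the structural isomorphisms $\lambda$, $\rho$, $\alpha$ and the coend injections through the composite $\xi \circ (\eta^{*} \otimes \id)$, and confirming that the nontrivial-looking ingredients (the strength $\varsigma^P$, the curry of $\ev$, the coend injection) all collapse to canonical maps upon specialising $X = 1$. Once this calculation is carried out, both morphisms are exhibited as isomorphisms on the nose (modulo the canonical iso $1 \to (\blank) \cong (\blank)$), which is what we wanted.
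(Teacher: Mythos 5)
Your proposal is correct and follows essentially the same route as the paper's own proof: unfold both morphisms at the component where the contravariant argument is the terminal object $1$, and observe that source and target coincide up to the canonical isomorphism $1 \to X \cong X$, invoking the remark of Subsection~\ref{ss:profunctors-and-strength} that the strengths are invertible there. You simply carry out the element-level bookkeeping explicitly where the paper leaves it implicit, which is a welcome (but not divergent) elaboration.
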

\begin{proof}
That these morphisms are invertible can be better seen by unfolding
the functor definitions on a finite set $Z$:
\[
\ustar{\xi_0}_Z : \Hom(1, Z) \to (1 \to Z)
\]
\[
\ustar{\left(\xi \circ (\eta^{*} \otimes \id)\right)}_Z : \int^{W} P(1, W)\times (i W \to F Z) \to \left(1 \to \int^{W} P(1, W)\times (i W \to F Z) \right)
\]
As the morphisms are only using the component $1$, they are invertible
(see remark on Subsection~\ref{ss:profunctors-and-strength}).
\end{proof}

Finally, we obtain the categorical counterpart of
\Cref{eq:isomonadintro} by applying this theorem to $\mathbb{C} =
\ProfS$, $\mathbb{D} = \Func$, $L = \ustar{\blank}$, $R =
\lstar{\blank}$:
\[
\Mon{\Func}~\mbox{and}~\Mon{\Diamond}~\mbox{are equivalent categories.}
\]

Note that in this case we cannot make a similar observation to the one
at the end of the previous subsection. The monad $\Diamond$ is not
monoidal, and therefore, we cannot (easily) lift the monoidal
structure from profunctors to functors. This is, partly, the symmetry
that gets broken from $\lstar{\blank}$ not being strong monoidal.

\section{Conclusions and Further Work}\label{sec:conclusions}

The relationship between different interfaces for computational
effects has been studied from different perspectives. The purely
syntactic approach has been covered by Lindley et
al.~\cite{LWY:IOAMMP}. A more programmatic account is covered by
Haskell's libraries, where the different interfaces are connected by
deriving type-class instances. On the other hand, the semantic point
of view for the connection was not very developed. Generally, the
semantic interpretation for notions of computation focusses only on
one of the interfaces, disregarding the connections between them. In
this paper we have taken a first step towards the connection from the
perspective of \emph{notions of computation as monoids}~\cite{RJ:NCM}.

As already remarked in the introduction, a direction of future work is
to replace the category $\Fin$ by a general category $\mathbb{C}$ with
an injection $i : \mathbb{C} \to \Set$, subject to conditions to be
determined, and probably similar to those used in \emph{relative
  monads}~\cite{ACU:MNE}. A related problem is to understand these
constructions in terms of Freyd categories and their
variations~\cite{LPT:MECBVPL,Atk:WCMA}, as they give a semantics for
arrows that do not suffer from size issues.

\begin{ack}
  The author is thankful to Soichiro Fujii, Pierre-Louis Curien,
  Marcelo Fiore and Ignacio L\'opez Franco for discussions on related
  topics, as well as to Hans-E. Porst and Ross Street for quick
  clarification on their results and pointing to related
  bibliography. The author is also grateful to the anonymous reviewers
  for their comments and suggestions.
\end{ack}

\bibliographystyle{eptcs}
\bibliography{generic}

\end{document}